\newcommand{\old}[1]{}
\renewcommand{\emph}[1]{\textit{#1}}
\definecolor{brown}{cmyk}{0, 0.72, 1, 0.45}
\definecolor{grey}{gray}{0.5}
\newcounter{rot}
\def\TB{T_{B\&B}}
\newcommand{\ep}{\varepsilon}
\newcommand{\ignore}[1]{}
\newcommand{\M}[1]{M^{(#1)}}
\def\cA{{\mathcal A}}
\def\cX{{\mathcal X}}
\def\cY{{\mathcal Y}}
\def\cP{{\mathcal P}}
\newcommand{\set}[1]{\left\{#1\right\}}
\def\cP{\mathcal{P}}
\def\cQ{\mathcal{Q}}
\newcommand{\proofend}{\hspace*{\fill}\mbox{$\Box$}}
\def\ii_(#1,#2){i_{#1}^{#2}}
\def\cM{\mathcal{M}}
\def\a{\alpha}
\def\b{\beta}
\def\d{\delta}
\def\D{\Delta}
\def\e{\varepsilon}
\def\F{\Phi}
\def\g{\gamma}
\def\z{\zeta}
\def\th{\theta}
\def\l{\lambda}
\def\p{\pi}
\def\r{\rho}
\newcommand{\poly}{\mathrm{poly}}
\newcommand{\sbs}{\subseteq}
\newcommand{\stm}{\setminus}
\newcommand{\brac}[1]{\left( #1 \right)}
\newcommand{\abs}[1]{\left|#1\right|}
\newtheorem{theorem}{Theorem}[section]
\newtheorem{lemma}[theorem]{Lemma}
  \theoremstyle{definition}
  \newtheorem{question}[theorem]{Question}
\newtheorem{definition}[theorem]{Definition}
\newtheorem{fact}[theorem]{Fact}
\newcounter{thmtemp}
\newcommand{\nospace}[1]{}
\def\path{\operatorname{PATH}}
\newcommand{\beq}[1]{\begin{equation}\label{#1}}
\def\eeq{\end{equation}}
\def\cL{{\mathcal L}}
\newcommand{\dist}{\mathrm{dist}}
\newcommand\xdn{\cX_n}
\newcommand\ydn{\cY_n}
\newcommand{\flr}[1]{\lfloor #1 \rfloor}
\newcommand{\NN}{\textsc{NN}}
\newcommand{\Greedy}{\textsc{Greedy}}
\newcommand{\ninsert}{\textsc{Near-Insert}}
\begin{document}

\newcommand{\PP}{\mathrm{P}}
\newcommand{\NP}{\mathrm{NP}}
\newcommand{\NPO}{\mathrm{NPO}}
\newcommand{\APX}{\mathrm{APX}}

\newcommand{\tf}{{\mathrm{TF}}}
\newcommand{\hk}{{\mathrm{HK}}}
\newcommand{\hf}{{H\mathrm{F}}}
\newcommand{\etf}{{\mathrm{E}2\mathrm{F}}}
\newcommand{\mst}{\mathrm{MST}}
\newcommand{\tsp}{\mathrm{TSP}}
\newcommand{\lb}{\mathrm{LB}}
\newcommand{\mm}{\mathrm{MM}}
\newcommand{\tmm}{\mathrm{2MM}}
\newcommand{\bmst}{\b_\mst}
\newcommand{\bmsT}[1]{\b_{\mst_{#1}}}
\newcommand{\bmstk}{\b_{\mst_k}}
\newcommand{\btsp}{\b_\tsp}
\newcommand{\bmm}{\b_\mm}
\newcommand{\btf}{\b_\tf}
\newcommand{\blb}{\b_\lb}
\newcommand{\bhk}{\b_\hk}
\newcommand{\bhf}{\b_\hf}
\newcommand{\betf}{\b_\etf}
\newcommand{\btfg}{\b_{\tf_g}}
\newcommand{\btF}[1]{\b_{\tf_{#1}}}
\newcommand{\betfg}{\b_{\etf_g}}
\newcommand{\btfG}[1]{\b_{\tf_{#1}}}
\newcommand{\betfG}[1]{\b_{\etf_{#1}}}
\def\whX{{\widehat X}}
\def\whY{{\widehat Y}}
\def\bxi{{\boldsymbol \xi}}
\def\by{{\bf y}}
\def\bd{{\bf d}}

\title{Scalefree hardness of average-case Euclidean TSP approximation}

\author{Alan Frieze}
\address{Department of Mathematical Sciences\\
Carnegie Mellon University\\
Pittsburgh, PA 15213\\
U.S.A.}
\email[Alan Frieze]{alan@random.math.cmu.edu}
\thanks{Research supported in part by NSF grant DMS-1362785.}
\author{Wesley Pegden}
\email[Wesley Pegden]{wes@math.cmu.edu}
\thanks{Research supported in part by NSF grant DMS-1363136.}

\date{\today}

\begin{abstract}
We show that if $\PP\neq \NP$, then a wide class of TSP heuristics fail to approximate the length of the TSP to asymptotic optimality, even for random Euclidean instances.  Previously, this result was not even known for \emph{any} heuristics (greedy, \emph{etc}) used in practice.  As an application, we show that when using a heuristic from this class, a natural class of branch-and-bound algorithms takes exponential time to find an optimal tour (again, even on a random point-set), regardless of the particular branching strategy or lower-bound algorithm used.
\end{abstract}

\maketitle
\section{Introduction}
In this manuscript, we prove that if $\PP\neq \NP$, then \emph{scalefree} heuristics cannot find asymptotically optimal approximate solutions even to random instances of the Euclidean TSP.  Roughly speaking, \emph{scalefree} heuristics are those which do not work especially hard at small scales.  This has two important consequences.  

First, it shows rigorously for the first time that several simple heuristics used for the TSP (Nearest Neighbor, Nearest-Insertion, \emph{etc.}) cannot approximate the TSP to asymptotic optimality (even in average-case analysis), since these heuristics are all scalefree.  In particular, our result can be seen as a defense of the intricacy of the celebrated polynomial-time approximation schemes of Arora \cite{Ar} and Mitchell \cite{Mit} for the Euclidean TSP, as we can show that the simpler algorithms cannot match their performance on sufficiently large random instances.

The second consequence is a new view on the complexity of the Euclidean TSP versus other ``actually easy'' problems on Euclidean point-sets.  Recall that for problems such as determining whether a Minimum Spanning Tree of cost $\leq 1$ exists, or finding the shortest path between 2 points from a restricted set of allowable edges, the complexity status in the Euclidean case is unknown, as no algorithm is known to efficiently compare sums of radicals.  In particular, it is conceivable that these problems are $\NP$-hard (even if $\PP\neq \NP$) as is the Euclidean TSP \cite{PapaComplete, Gcomplete}.  Just as there are (sophisticated) efficient approximation algorithms for the Euclidean TSP, there are (trivial) efficient approximation algorithms for, say, the MST also: simply carry out Kruskal's algorithm and calculate the length of the spanning tree to some suitable precision, to obtain a good approximation.  Our result allows a rigorous distinction between these types of approximations:  Kruskal's algorithm is scalefree in the sense of our paper, and we show that no algorithm with this property can well approximate the Euclidean TSP.   

In particular, a surprising message from our result is that it is possible to connect Turing machine complexity to the practical difficulty of a problem for Euclidean point-sets, in spite of the unresolved state of the difficulty of comparing sums of radicals.  In particular our result connects traditional worst-case deterministic analysis to \emph{average-case}, \emph{approximate} analysis of the Euclidean TSP, for a certain class of algorithms.

\bigskip

To motivate our definition of scalefree, we recall a simplification of the dissection algorithm of Karp, which succeeds at efficiently approximating the length of the shortest TSP tour in a random point set to asymptotic optimality.
\begin{enumerate}[label=(\arabic*)]
\item Let $s(n)=\flr{\tfrac{n^{1/d}}{\log^{1/d}(n)}}^d$ and divide the hypercube $[0,1]^d$ into $s(n)$ congruent subcubes $Q_1,\dots,Q_{s(n)}$, letting $X_i=X\cap Q_i$ for each $i=1,2,\dots,s(n)$.
\item Using the dynamic programming algorithm of Held and Karp \cite{HK0}, find an optimal tour $T_i$ for each set $X_i$.
\item Patch the tours $T_i$ into a tour through all of $X$.
\end{enumerate}
This algorithm runs in expected time $O(n^2\log n)$ and finds an asymptotically optimal tour; i.e., the length of $T$ is $(1+o(1))$ times the optimal length, w.h.p.  In some sense, its defining feature is that it works hard (running in exponential time in small sets of vertices) on a small scale, and is more careless on a large scale.  Our definition of scalefree is intended to capture algorithms which do not exhibit this kind of behavior.

We will define a \emph{heuristic} $H$ to be a function which takes as input the distance matrix for a point-set $X$, and outputs a structure (in this paper, either a TSP of $X$ or a list of paths through $X$) which depends only on comparisons of sums of distances in $X$.

Formally, given the matrix of distances $d_1,\dots,d_{\binom n 2}$ among points in $X$, for each choice of coefficients $\xi_i\in \{0,1,-1\}$ $(1\leq i\leq \binom n 2)$, the sum $\sum \xi_i d_i$ is either negative, zero, or positive.  If all $3^{\binom{n}{2}}-1$ nontrivial sums are nonzero, we say the points of $X$ are in general position.   Now, $H$ is a heuristic if for any points in general position, the signs of these $3^{\binom{n}{2}}-1$ sums determine the output of $H$.  We call the heuristic $H$ polynomial-time if it is always possible to determine the output of the function by querying the sign of only polynomially-many sums of the form $\sum \xi_i d_i$.




Our definition of scalefree requires that the small-scale behavior of a heuristic can be efficiently simulated (usually, by simply running the heuristic on the local data).  Care must be taken when making this precise, and our precise 
definitions of scalefree, together with proofs that commonly used heuristics satisfy the definition, are given in Section \ref{s.scalefreeness}.  Our main theorem is as follows:



\begin{theorem}
\label{t.length}
If $H(X)$ is a scalefree polynomial-time TSP heuristic and $\PP\neq \NP$, then there is an $\e_H>0$ so that w.h.p, $H(\bar \xdn)$ has length greater than $(1+\e_H)$ times optimal, where $\bar\xdn$ is a discretization of the random set $\xdn\sbs [0,1]^d$.
\end{theorem}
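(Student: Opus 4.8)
The plan is to prove the contrapositive: from a scalefree polynomial-time heuristic $H$ that beats every fixed approximation ratio --- one for which, for each $\e>0$ and along a subsequence of $n$, the bound $|H(\bar\xdn)|\le(1+\e)\,\mathrm{OPT}(\bar\xdn)$ holds with probability bounded away from $0$ --- I would build a polynomial-time solver for the \emph{exact} Euclidean TSP on discretized instances, contradicting $\PP\neq\NP$ by the $\NP$-hardness of that problem \cite{PapaComplete,Gcomplete}. The construction has three steps: a dissection that pushes global near-optimality down to a local scale, a scalefree simulation that reads off $H$'s local decisions, and an embedding that plants a worst-case instance in the random background. Two observations guide it. Exactness, not approximation, must be the target, since the Euclidean TSP already admits a PTAS \cite{Ar,Mit} and the reduction therefore aims to recover an \emph{optimal} tour; and since the standard hard instances are built so that their optimum is determined combinatorially, extracting an optimal tour settles the underlying $\NP$-hard question without ever comparing two sums of radicals --- precisely the obstacle flagged in the introduction.

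First I would fix an intermediate scale, partitioning $[0,1]^d$ into subcubes $Q_i$ each expected to contain $m=m(n)\to\infty$ points of $\xdn$, with $m$ growing slowly. Writing $\ell_i(H)$ for the length of the edges of $H(\bar\xdn)$ lying in $Q_i$ and $\bar X_i=\bar\xdn\cap Q_i$, subadditivity of the TSP functional together with the patching estimates behind Karp's dissection algorithm give $\mathrm{OPT}(\bar\xdn)=\sum_i \mathrm{OPT}(\bar X_i)+o(n^{(d-1)/d})$ and $\sum_i\ell_i(H)\le|H(\bar\xdn)|+o(n^{(d-1)/d})$. Hence global $(1+\e)$-optimality forces $\sum_i\big(\ell_i(H)-\mathrm{OPT}(\bar X_i)\big)=o(n^{(d-1)/d})$, so for all but an $o(1)$-fraction of the subcubes $H$ builds a local tour within a $(1+o(1))$ factor of the local optimum. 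The decisive move is to restrict attention to subcubes whose configuration is \emph{stable} in the sense of Section~\ref{s.scalefreeness}: at a stable local instance the next-best combinatorial tour is separated from the optimum by a positive gap, so ``$(1+o(1))$-optimal'' upgrades to \emph{exactly optimal}, and stability is at the same time the hypothesis under which scalefreeness licenses local simulation.

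Next, by the definition of scalefree (Section~\ref{s.scalefreeness}) there is a $\poly(n)$-time procedure that, fed only the local data of a stable subcube, reproduces the combinatorial tour $H$ constructs there; composing it with the previous step yields a polynomial-time algorithm $A$ returning an \emph{optimal} tour on the stable local instances occurring inside $\bar\xdn$. It remains to feed $A$ a worst-case input. Given an arbitrary discretized TSP instance $I$ on $m$ points, I would exploit the freedom in the precision --- the theorem asks for failure at \emph{any} level --- to discretize $\xdn$ finely enough that a rescaled copy of $I$ fits on the grid of a single subcube, sample the remaining $n-m$ points at random to supply the asymptotic background that the first step needs, and take a union bound over the $\poly$-many subcubes to locate one realizing $I$ at which $H$ is stable. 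Running $H$ there and applying $A$ recovers the optimal tour of $I$ in polynomial time.

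The main obstacle is the coupling of the parameters across these steps, and especially the embedding. One must choose $m(n)$ large enough that $\NP$-hardness bites, small enough that the local simulation runs in $\poly(n)$ time, and with a per-subcube optimality gap that survives the $o(n^{(d-1)/d})$ dissection error once that error is spread over the $\approx n/m$ subcubes. More delicate still, planting $I$ leaves the instance only partially random, so one must argue both that the near-optimality hypothesis survives the planting --- equivalently, derandomize the background into a fixed point set witnessing $\PP=\NP$ outright --- and that the planted copy of $I$ is \emph{stable}, since scalefree simulation applies only there. Reconciling the average-case optimality hypothesis with a worst-case planted instance is the heart of the proof, and is where the perturbation-robust formulation of scalefreeness from Section~\ref{s.scalefreeness} must be used in full.
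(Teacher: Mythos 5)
Your reduction runs in the opposite direction from the paper's, and the step on which it founders is exactly the one you flag as ``the heart of the proof'': the planting. The hypothesis you negate is a statement about the \emph{random} set $\bar\xdn$. Once you plant a worst-case instance $I$ into a subcube, the point set is no longer random: a specific $m$-point configuration, at the precision needed to encode $I$, occurs in a random set with probability exponentially small in $m$, so your ``union bound over the polynomially many subcubes to locate one realizing $I$'' fails --- w.h.p.\ \emph{no} subcube realizes $I$ once $m=m(n)\to\infty$ (and $m$ must be at least $n^{c}$ for the reduction to run in time polynomial in $|I|$). Average-case near-optimality of $H$ places no constraint whatsoever on its behavior on such exponentially atypical inputs, and scalefreeness does not repair this: it lets you \emph{simulate} $H$ on stable protected configurations, it does not force $H$ to be locally \emph{optimal} on them. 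Your fallback (``derandomize the background'') is circular, since near-optimality at the planted cube is not implied by the probabilistic hypothesis for any background. There is also a quantitative gap in the ``upgrade to exactness'': for fixed $\e>0$ the negated hypothesis bounds the total excess by $\e\cdot\mathrm{OPT}$, i.e.\ an average excess of order $\e m$ per subcube in the rescaled metric, while the combinatorial gap of a Papadimitriou-type instance is an absolute constant $\e_0$; so with fixed $\e$ and growing $m$, near-optimality on most subcubes never becomes exact optimality on any of them. Making $\e$ depend on $m$ forces you to select $n$ from a subsequence you cannot compute, and making $m$ constant makes $I$ solvable by brute force, so nothing is gained.

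The paper escapes both problems by reversing the quantifiers, and this is the idea missing from your sketch. Scalefreeness supplies one uniform polynomial-time simulator $A_H$; from it the paper builds a polynomial-time algorithm for \emph{set cover}, using Papadimitriou's gap reduction wrapped in the gadgets $\cQ$, $\cM_H$, $\Pi_H^3$. If $\PP\neq\NP$, that algorithm fails on some \emph{fixed, constant-size} instance $(\cA_0,n_0)$, and the only possible failure mode is that no path in $A_H$'s output list transits the embedded Papadimitriou set optimally. Because the resulting gadget has constant size, approximate copies of it occur spontaneously, linearly often, in the random set (Lemma \ref{l.findit}) --- no planting, hence no average-case/worst-case tension. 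At each copy, Lemmas \ref{qlemma}, \ref{l.Piset} and \ref{l.P3} force hypotheses \ref{p.atapath} and \ref{p.angle} of Definition \ref{d.scalefree}, so at the copy the heuristic is either predictable (case \ref{p.predictable}, hence suboptimal by an additive constant, since $A_H$'s list contains no optimal transit), shortenable (case \ref{p.shortened}), or unstable (case \ref{p.unstable}); the unstable case --- which your proposal never addresses --- is handled by the resampling chain $X^{13},\dots,X^{1}$ of identically distributed perturbations of $X$ on the torus. A constant additive loss on linearly many copies against a $\Theta(n)$ optimum yields the $(1+\e_H)$ factor. In short: fix the hard instance first, from the failure of the simulator; keep it of constant size so that it appears on its own inside the random set; and charge the heuristic per occurrence --- rather than trying to transfer average-case optimality to a planted worst-case instance, which no average-case hypothesis can support.
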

Note that it is possible that a polynomial-time heuristic as we have defined it may not actually be implementable in polynomial time on a Turing machine, since the calls to the comparison blackbox may not be efficiently implementable, depending on the hardness of comparing sums of radicals.  But our theorem applies even just if the number of comparisons is polynomially bounded.



We will also consider the effectiveness of a scalefree heuristic in the context of a Breadth-First Branch and Bound algorithm.  Defined precisely in Section \ref{BB}, this is a branch-and-bound algorithm which branches as a binary tree, which is explored in a breadth-first manner.  Branch-and-bound algorithms used in practice do typically satisfy this definition, as they make use of an LP-based lower bound on the TSP, and branch on the binary $\{0,1\}$-possibilities for fractional variables.
\begin{theorem}\label{thmbb}
Any Breadth-First Branch and Bound algorithm that employs a scalefree heuristic to generate upper bounds w.h.p. requires $e^{\Omega(n)}$ time to complete.
\end{theorem}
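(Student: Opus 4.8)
The plan is to derive Theorem \ref{thmbb} from Theorem \ref{t.length} together with a geometric counting lemma. By Theorem \ref{t.length}, w.h.p. every tour produced by the scalefree heuristic has length at least $(1+\e_H)\mathrm{OPT}$, so the incumbent upper bound maintained by the branch-and-bound cannot drop below $(1+\e_H)\mathrm{OPT}$ on the strength of the heuristic alone. The incumbent can only be improved past this threshold when the branching itself produces an integral LP solution, i.e.\ an actual tour of length $<(1+\e_H)\mathrm{OPT}$. The whole argument is to show that, in breadth-first order, the tree must expand $e^{\Omega(n)}$ nodes before any such tour can be reached.

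First I would establish the combinatorial input: w.h.p.\ the random instance $\ydn$ contains $\Omega(n)$ pairwise-disjoint local ``gadgets'', each a bounded-size point configuration admitting two distinct near-optimal local routings whose lengths differ by $O(1)$ and for which the subtour LP relaxation is fractional (so that the corresponding edge variables are precisely fractional variables the algorithm must branch on). Since $\mathrm{OPT}=\Theta(n)$ in this rescaling, I would choose the number and scale of the gadgets so that resolving every one of them the wrong way adds at most $\tfrac12\e_H\mathrm{OPT}$ to the length; then each of the $2^{\Omega(n)}$ ways to resolve the gadgets yields a tour of length $\le (1+\tfrac12\e_H)\mathrm{OPT}$. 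Call these the \emph{near-optimal} tours; they are organised as a complete binary tree of independent gadget choices.

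Next I would run the pruning analysis. Call a branch-and-bound node \emph{good} if it is consistent with at least one near-optimal tour. A good node has LP lower bound at most the length of that tour, hence at most $(1+\tfrac12\e_H)\mathrm{OPT}<(1+\e_H)\mathrm{OPT}$, so while the incumbent sits at $(1+\e_H)\mathrm{OPT}$ no good node is pruned. Because each gadget keeps the LP fractional until its variable has been branched on, no node is integral (no tour is found, and hence no improvement of the incumbent is possible) until all $\Omega(n)$ gadgets have been resolved; any such node lies at depth $D=\Omega(n)$. Along every root-to-leaf path in the good subtree all gadget variables must be branched, and both children of each gadget-branch are again good, so the good subtree contains a complete binary tree on the $\Omega(n)$ gadget decisions and therefore $2^{\Omega(n)}$ nodes, of which all but a vanishing fraction lie at depth strictly less than $D$. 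Crucially, breadth-first exploration must process every node at depth below $D$ before any node at depth $D$; since the first tour (the first chance to lower the incumbent) appears only at depth $D$, all $2^{\Omega(n)}$ of these shallower good nodes are expanded while the incumbent is still $(1+\e_H)\mathrm{OPT}$, giving total running time $e^{\Omega(n)}$.

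The main obstacle is the geometric lemma. Proving that a random point set w.h.p.\ contains $\Omega(n)$ disjoint local gadgets that are simultaneously (i) cheap to flip, (ii) genuinely fractional for the subtour LP, and (iii) freely and independently resolvable without the global LP shortcutting to an integral near-optimal solution at shallow depth, is the delicate part; it is exactly what forces the algorithm to branch on $\Omega(n)$ variables rather than stumbling onto a good tour early, and it is where the breadth-first hypothesis is used in an essential way (a depth-first search could rush down one path to a good incumbent). A secondary point, controlled by Theorem \ref{t.length}, is to ensure that heuristic calls made at \emph{every} node, not merely at the root, still cannot beat $(1+\e_H)\mathrm{OPT}$, so that the breadth-first frontier is never cleared early by an unexpectedly good heuristic solution.
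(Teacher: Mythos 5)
The central gap is your treatment of the incumbent. In the model of Theorem \ref{thmbb}, the scalefree heuristic is invoked at \emph{every} node $v$ of $\TB$, producing a tour that accounts for the branching constraints $I_v,O_v$, and the upper bound $B$ is the best such value found so far. Theorem \ref{t.length} controls only the heuristic's behaviour on the unconstrained random instance; it says nothing about these constrained calls, and the real danger is precisely that branching decisions can repair the local mistakes that make the heuristic suboptimal (once $I_v$ prescribes the optimal local routing at the places where $H$ errs, nothing prevents $H(v)=\l(v)$, which drops $B$ to optimal and terminates the run). You relegate this to a ``secondary point\dots controlled by Theorem \ref{t.length},'' but it is not controlled by that theorem, and it is in fact the entire content of the paper's proof. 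The paper proves Lemma \ref{lemdone}: there exist $\e_1,\e_2>0$ such that w.h.p.\ \emph{every} node $v$ at depth at most $k_1=\e_1 n$ satisfies $H(v)\geq \l(v)+\e_2 n^{1/2}$. Its proof does not use Theorem \ref{t.length} as a black box but re-enters its proof: w.h.p.\ there are $\a_0 n$ disjoint aligned copies of $\Pi_H^3$ with property $\Xi_X$, at each of which the heuristic pays a fixed penalty; a node at depth at most $\a_0 n/2$ constrains at most $\a_0 n/2$ edges, so a linear number of these penalty copies are unaffected by $I_v,O_v$, and the constrained heuristic call still pays $\Omega(n^{1/2})$ in total. (Note this can only be true up to depth $O(n)$, which is why the depth restriction appears and why breadth-first exploration is needed.) Without some version of this lemma, your claim that the incumbent cannot fall below $(1+\e_H)\mathrm{OPT}$ before an integral node is reached is unjustified, and the rest of the argument collapses.

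A secondary problem is the LP-centric framing. You assume incumbent improvements can come only from integral LP solutions and that your gadgets keep the subtour relaxation fractional; but in the paper's model the lower bound $b_v$ is arbitrary (the paper even allows the exact value $\l(v)$, the hardest case for a size lower bound), there need be no LP anywhere, and upper bounds come exclusively from heuristic calls. This framing also forces you to posit an additional unproven geometric lemma (linearly many disjoint gadgets that are simultaneously cheap to flip, LP-fractional, and independently resolvable), which the paper's argument never needs. On the positive side, your ``good subtree'' counting --- termination forces every root-to-leaf path of the near-optimal-consistent subtree to resolve all $\Omega(n)$ gadget decisions, and each such branching keeps both children good --- is a more explicit justification of exponential size than the paper's rather terse assertion that no vertex at depth at most $k_1$ is a leaf, whence there are at least $2^{k_1}$ leaves. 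But since that counting rests on the unjustified incumbent claim, the proposal as written has a genuine gap rather than constituting an alternative proof.
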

We close this section by noting that our proof of Theorem \ref{t.length} provides a recipe to attempt to eliminate the $\PP\neq \NP$ assumption for any specific, fixed scalefree heuristic $H$ as follows.  Let $\cL$ be any decision problem in $\NP$.
\begin{theorem}
\label{t.halg}
For any scalefree polynomial-time TSP heuristic $H(X)$, there is a polynomial-time algorithm $A_H$ such that for any $L\in \cL$, we have:
\begin{enumerate}[label=(\alph*)]
\item If $L\notin \cL$, then $A_H(L)$ returns false.
\item If $L\in \cL$, then $A_H(L)$ returns either true, or exhibits an $\e_H>0$ and a proof that $H(\bar \xdn)$ has length greater than $(1+\e_H)$ times optimal w.h.p.
\end{enumerate}
\end{theorem}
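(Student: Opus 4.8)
The plan is to read the algorithm $A_H$ off of the reduction behind Theorem~\ref{t.length} and to verify (a) and (b) through two logically separate output channels: a \emph{decision} channel whose soundness yields (a), and a \emph{proof} channel that yields (b) exactly when $H$ behaves suboptimally. Since $\cL\in\NP$, fix a polynomial-time verifier $V(L,w)$. Given an instance $L$, the first step is to compile it into a single local TSP gadget $\gamma_L$ of size $\poly(|L|)$ admitting two competing local routings whose relative length encodes the bit $[L\in\cL]$: one routing is optimal, the other longer by a fixed $\delta>0$, the optimal value $B$ is fixed by the construction, and the optimal routing traces out a witness $w$ that $V$ can check. This is precisely the sort of configuration built in the proof of Theorem~\ref{t.length}, where selecting the optimal routing amounts to resolving a comparison of sums of radicals that a bounded-precision procedure cannot perform in general. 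To keep the construction compatible with the bounded precision of $\bar\xdn$, I would perturb $\gamma_L$ into general position at the allowed scale so that the local behavior of $H$ is \emph{stable}.

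For the decision channel, I would use the scalefree hypothesis: by definition, the local behavior of $H$ on $\gamma_L$ can be simulated in polynomial time by running $H$ on the local data, and stability guarantees that this simulation matches the behavior of $H$ on $\gamma_L$ when it is embedded in $\bar\xdn$. Thus $A_H$ reads off, in polynomial time, the routing $H$ selects; it decodes the candidate witness and runs $V$ on it, returning \textbf{true} precisely when $V$ accepts. Because $V$ accepts only genuine witnesses, this channel is sound, and (a) follows immediately: if $L\notin\cL$ there is no witness, $V$ never accepts, and the decision is \textbf{false}.

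The proof channel supplies the alternative promised in (b). Whenever $A_H$ can \emph{certify} that $H$'s local routing on $\gamma_L$ is suboptimal --- by exhibiting a strictly shorter local tour --- I would invoke the planting step from the proof of Theorem~\ref{t.length}: embedding $\Theta(n)$ well-separated translated copies of $\gamma_L$ into $\bar\xdn$ and using stability together with the near-additivity of optimal tour length across separated regions to show that $H$ repeats the same suboptimal choice on a constant fraction of the copies. This inflates $L_H(\bar\xdn)$ by $\Theta(\delta n)=\e_H\cdot\mathrm{OPT}$ w.h.p., which is exactly the proof that $A_H$ exhibits. In case (b), with $L\in\cL$ and $H$ failing to select the witness-routing, the standing hypothesis $L\in\cL$ forces the optimal value to be $B$, so $H$'s choice is genuinely suboptimal and the exhibited proof is valid.

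The hard part is the certification inside the proof channel, which is also where the $\PP$-versus-$\NP$ content lives. The algorithm $A_H$ cannot itself compute $[L\in\cL]$ --- that is the $\NP$ question --- so it cannot decide outright whether a given routing of $H$ was optimal. The plan is therefore to keep the two channels logically separate: (a) constrains only the \emph{decision}, which is sound by construction, so a proof emitted on a no-instance does no harm there; while (b) needs a certificate of suboptimality on yes-instances where $H$ errs. I would obtain such a certificate by designing $\gamma_L$ so that the only optimal local routing is the witness-shortcut, and any heuristic that misses it is left with a \emph{locally improvable} configuration --- a short, explicitly exhibitable detour whose removal shortens the tour --- so that a single improving move certifies suboptimality without any global search and without an exact radical comparison. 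Making this certificate provably present whenever a scalefree $H$ mishandles the shortcut, while controlling the interaction between the witness-encoding, the perturbation into general position, and the bounded precision of $\bar\xdn$, is the crux of the argument. The payoff is the recipe advertised after the statement: if the proof channel never fires for a fixed $H$, then the decision channel returns \textbf{true} on every yes-instance and \textbf{false} on every no-instance, deciding $\cL$ in polynomial time, so that taking $\cL$ to be $\NP$-complete forces $\PP=\NP$.
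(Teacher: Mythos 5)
Your proposal is internally inconsistent at exactly the point you call ``the crux,'' and that is where it has a genuine gap. In your middle paragraph you correctly observe that in case (b) the hypothesis $L\in\cL$ is standing, so the exhibited proof of $H$'s failure is allowed to \emph{use} that hypothesis; but your final paragraph abandons this and instead demands an \emph{unconditional}, locally checkable certificate of suboptimality --- a gadget $\gamma_L$ designed so that any heuristic missing the witness-routing is left with ``a single improving move'' that can be exhibited in polynomial time. Nothing supports this design claim: a Papadimitriou-style gadget in which every suboptimal routing is certifiably improvable by an explicitly findable local move is a strong structural property in tension with the NP-hardness of the gadget, and neither your proposal nor the paper establishes anything like it. Since your proof channel ``fires'' only when such a certificate is found, if the certificate fails to exist then on a yes-instance where $H$ errs your $A_H$ returns false and exhibits nothing, violating (b). The paper needs no such certificate. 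Its $A_H$ Karp-reduces $L$ to a set-cover instance, builds $\cM_H$, runs the simulation algorithm $A$ of Definition~\ref{d.scalefree} to enumerate a polynomial list of paths, and compares the shortest enumerated sub-path through a Papadimitriou set to the threshold of \ref{Papa.thres}. Returning true is sound purely by \ref{Papa.thres} (no short path exists on no-instances), which gives (a); and if it returns false while $L\in\cL$, then no enumerated path transits the Papadimitriou set below threshold, so outcome \ref{p.predictable} can only produce suboptimal transits, and the entire machinery of Section~\ref{alsfh} (Lemmas~\ref{l.Piset}, \ref{l.P3}, \ref{l.findit} and the Claim, which uses $L\in\cL$ precisely at the step converting \ref{p.predictable} into the shortening property $\Xi_X$) \emph{is} the exhibited proof, with an explicit $\e_H$, that $H$ is $(1+\e_H)$-suboptimal w.h.p. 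The ``proof'' is conditional on $L\in\cL$, which is exactly what (b) permits.

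Two secondary deviations in your decision channel are worth flagging, though both are patchable. First, decoding a witness $w$ from $H$'s routing and running a verifier $V$ requires the whole chain of reductions to be witness-preserving and requires that a witness be recoverable from a geometric path; the paper avoids this entirely by deciding via the length threshold of \ref{Papa.thres}, needing only a Karp reduction. Second, Definition~\ref{d.scalefree} does not let you ``read off the routing $H$ selects'': the algorithm $A$ outputs a \emph{list}, and $H$'s actual path on an embedded copy is guaranteed to appear in that list only under hypotheses \ref{p.first}--\ref{p.angle} and only when outcomes \ref{p.unstable} and \ref{p.shortened} do not occur; your assertion that ``stability guarantees that this simulation matches the behavior of $H$'' overstates the definition, and your channel should instead verify every decoded candidate from the list (soundness survives this change, so (a) is unaffected).
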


\section{Scalefreeness of heuristics}
\label{s.scalefreeness}
In this section, we will in fact give two definitions of scalefree, the first simpler but more restrictive, and the second more general, but more complicated.

For simplicity of notation, we will let $t=n^{\frac 1 d}$ and work with $\ydn\sbs [0,t]^d$, defined as the rescaling $t\cdot \xdn$.  Observe that with this choice of rescaling, a typical vertex in $\ydn$ is at distance $\Theta(1)$ away from its nearest neighbor.

Our notion of scalefreeness captures a common property of many simple TSP heuristics: namely, the small-scale behavior of the algorithm can be efficiently simulated (because it is essentially governed by the same rules as the large-scale behavior).  To give a precise definition of scalefree,  
we will use the notion of a polynomial-time path-finding heuristic $A_H$ for the Heuristic $H$:

\begin{definition} A polynomial-time path-finding heuristic is an algorithm which has access to a blackbox for making comparisons of sums of distances among the input points, and outputs a list of Hamilton paths through a set of points in general position in time polynomial in the number of points, given the distance matrix through the point set.
\end{definition}

Now we are ready for the simpler of our two definitions of scalefree:
\begin{definition}Call \label{d.firstscalefree}$H$ \emph{scalefree} if there exists a constant $R$ and a polynomial-time path-finding heuristic $A_H$, such that given an input set $X$ for $H$, the following implication holds for all sets $S_p=B(p,1)\cap X$:

\textbf{If:} 
\vspace{-2.5ex}
\begin{enumerate}[label=(\alph*)]
\item
  \label{p.protectedA}
  The annulus $B(p,R)\setminus B(p,1)$ around $S_p$ contains no points of $X$,
\item \label{p.generalA}
  The points of $S_p$ are in general position,
\end{enumerate}

\textbf{Then:} 
$H$ traverses $S$ in a path, which belongs to the list produced produced by $A_H$ when a congruent copy of $S$ is used as the input to $A_H$.
\end{definition}
(Here, $B(p,r)$ is the Euclidean ball of radius $r$ centered at $p$.)

Perhaps the best motivation for this definition is simply the proofs that it is satisfied on some important examples.  We begin with the \textsc{Greedy} heuristic, which produces a tour by adding, at each step, the shortest edge which would not create a non-Hamilton cycle or a vertex of degree 3.  (Note that if there is a tie, the points are not in general position, and we allow \textsc{Greedy} to have undefined behavior).

\begin{fact}
  The Greedy Heuristic $\textsc{Greedy}(X)$ Heuristic is scalefree.
\end{fact}
\begin{proof}
  For $\Greedy(X)$, we let $R=3$, say.  $A_\textsc{Greedy}$ is defined simply to be the Greedy heuristic for choosing a Hamilton path, which adds at each stage the shortest edge which would not create a cycle or a vertex of degree 3; as usual, when the input points are not in general position, the behavior of the $A_\textsc{Greedy}$ can be arbitrary.  

  Given a set $S_p=B(p,1)\cap X$, the distance between any two points in $S_p$ is smaller than the distance across the annulus $B(p,3)\setminus B(p,1)$, and so at some stage of $\textsc{Greedy}$ on $X$, $S_p$ will be covered by a path, while no edges cross the annulus.  Assuming the points of $S_p$ are in general position, this is the same as the path that will be returned by $A_H$.
\end{proof}

This case was particularly simple. For example, $A_{\textsc{Greedy}}$ actually always outputs just a single path, instead of a list.  For the Nearest Neighbor algorithm, things are just slightly more complicated:

\begin{fact}
  The Nearest Neighbor heuristic $\textsc{NN}(X)$ is scalefree.
\end{fact}
\begin{proof}
  Recall that the Nearest Neighbor heurstic $\NN(X)$ begins from a distinguished point $x_0\in X$, and then grows a path by choosing, at each step, the nearest vertex to the current one not already on the path.  (In the last step, the endpoints of the Hamilton path are joined to create a tour.)  To show that $\NN(X)$ is scalefree, we again let $R=3$; we then consider a set $S_p=B(p,1)\cap X$ such that the annulus $B(p,3)\setminus B(p,1)$ contains no points of $X$.

  Observe that as $\NN(X)$ progresses, there will be some first step when the heuristic chooses a point from $S_p$.  Thereafter, since all points in $S_p$ are closer to each other than to points in $X\setminus S_p$, it will exhaust the points in $S_p$ before revisiting $X\setminus S_p$.

  Thus, we let $A_\NN(S)$ be the algorithm which uses the Nearest Neighbor heuristic to choose a path through $S$ \emph{for each choice of the initial vertex $x$} in $S$.  In particular, $A_\NN(S)$ will output up to $|S|$ distinct paths through $S$.  (Again, if any ties would be encountered, the points of $S$ are not in general position and so $A_\NN(S)$ is allowed to have undefined behavior.)    With this choice of $A_\NN(S)$, the path taken by $\NN(X)$ through $S$ will be among the list produced by $A_\NN(S)$, assuming the points of $S$ are in general position.
\end{proof}

These examples show the essential character of the notion of a scalefree heuristic.  But if we restrict ourselves to Definition \ref{d.firstscalefree}, it would seem we cannot hope to show that some other common heuristics are scalefree.  Consider, for example, the Nearest-Insertion heuristic $\ninsert(X)$, which begins with a triangle on vertices $x_1,x_2,x_3\in X$, and then, at each step, grows the existing cycle by finding the vertex nearest to the vertex set of the existing cycle, and inserts the vertex into the existing cycle at minimum cost.  If we consider an isolated collection of vertices $S_p$, predicting the local behavior of $\ninsert(X)$ at $S_p$ seems difficult.  Consider for example the first step at which the insertion of a vertex $s\in S_p$ occurs, joined to vertices $x$ and $y$ already on the cycle.  The next vertex in $S_p$ to be joined to the cycle depends not only on $S_p$ and the vertex $s$ but also potentially on the positions of the vertices $x$ and $y$, which are outside the ``local configuration''.

But it will turn out that this is not really a problem.  The theorems we prove essentially will work with any definition of scalefreeness where the implication Definition \ref{d.firstscalefree} is only required to hold for some reasonable fraction of sets $S_p$.  Since $X$ is itself a random set in our Theorems, this means that we can impose any number of reasonable restrictions on the set $S_p$ in the implication, and still have the resulting notion of scalefree be strong enough for our techniques to give Theorem \ref{t.length}.

One way we will do this is by restricting the implication to sets $S_p$ surrounded by special configurations of points.  To this end, given $X\sbs [0,t]^d$ and $Y=\{y_1,y_2,\dots,y_K\}\sbs [0,2R]^d$, we say $X$ is $(R,\e)$-\emph{protected} by $Y$ at $p$ if 
\[
X\cap (B(p,R)\stm B(p,1))= Y'\approx_\e (Y+p)\quad\text{and}\quad Y'\sbs B(p,\sqrt R).
\]
Here, $A\approx_\e B$ means that there is a bijection $f$ from $A$ to $B$ such that for all $a\in A$, $\dist(a,f(a))<\e$. 

Thus, roughly speaking, a set is $(R,\e)$-protected by $Y$ at $p$ if it is surrounded by a nearly-congruent copy of $Y$ in an annulus containing no other points of $X$.   Notice that requirement \ref{p.protectedA} on $S_p$ in Definition \ref{d.firstscalefree} is simply the requirement that $S_p$ is $(R,\e)$-protected by $Y=\varnothing$. One way we will generalize Definition \ref{d.firstscalefree} is simply by allowing the restriction of the implication to subsets $S_p$ of $X$ which are $(R,\ep)$-protected by a fixed set $Y_H$ other than $\varnothing$.


In particular, the following generalization of scalefreeness adds several restrictions to the sets $S_p$ required to satisfy the implication (thus generalizing the definition) in a way which is tailor-made to be sufficient to easily include common insertion heuristics.

  \begin{definition}\label{d.scalefree}$H$ is \emph{scalefree} if there exists $R,\e$,
    some \emph{gadget} $$Y=\{y_1,y_2,\dots,y_s\}\sbs B(0,\sqrt R)\stm B(0,1),$$ and a polynomial path-finding heuristic $A_H$, such that given an input set $X$ for $H$, there is a bounded-size exceptional set $X_0$ such that the following implication holds

\textbf{If:} We have: 

\vspace{-2.5ex}

\begin{enumerate}[label=(\alph*)]
\item  \label{p.protected} $X$ is $(R,\e)$-protected by $Y$ at $p$.
\item \label{p.general} The points of $B(p,R)\cap X$ are in general position.
\item \label{p.exceptional} $B(p,R)\cap X$ contains no points of the exceptional set $X_0$.
\item \label{p.atapath} The tour $T_H$ found by $H$ traverses $S_p\cup Y$ in a single path $P_H$, whose length is within $\ep$ of the length of the shortest path through $S_p$ with the same endpoints.
\item \label{p.angle} The vertices $x,y\in X\stm P_H$ adjacent in $T_H$ to the endpoints of $P_H$ satisfy $\angle xpq,\angle ypq<\e$, where $q=p+(1,0)$.
\end{enumerate}
\textbf{Then:}  The path $P_H$ by which $H$ traverses $S_p$ belongs to the list produced by $A_H$ when a congruent copy of $S_p$ is used as the input to $A_H$.
  \end{definition}


Again we motivate the applicability of this definition by example.  Recall that the \emph{Nearest Insertion} heuristic $\ninsert(X)$ begins, say, with $T$ as the triangle on distinguished vertices $x_1,x_2,x_3\in X$.  (For Definition \ref{d.scalefree}, we will choose this triple of vertices as the exceptional set $X_0$.)   At each step of the algorithm until $T$ is a tour, the Nearest Insertion algorithm finds the vertex $z$ in $X\stm T$ which is closest to $V(T)$, finds the edge $\{x,y\}\in T$ for which $C=d(x,z)+d(y,z)-d(x,y)$ is minimized, and patches the vertex $z$ in between $x$ and $y$ in the tour, at cost $C$.  

\begin{fact}$\ninsert(X)$ is scalefree.
\end{fact}
\begin{proof}
We let $X_0$ be the set of distinguished vertices $x_1,x_2,x_3\in X$ which are the vertex-set of the initial tour for $\ninsert(X)$.  We will assume $R$ is a large constant and $\ep$ is a small constant, without determining the weakest possible requirements on their magnitudes.
  
To prove scalefreeness, we first observe that given an $S_p$ which is $(R,\ep)$-protected by the gadget $Y$, we need only show the implication of Definition \ref{d.scalefree} holds for $S_p$ assuming that there is only ever one insertion of a vertex from $Y\cup S$ at an edge whose endpoints both lie outside of $Y\cup S$; if more than one such insertion occurs in the running of $\ninsert$, then the final tour chosen by the Heuristic will not intersect $Y\cup S$ at a path, violating condition \ref{p.atapath} in the definition.


We use the gadget $Y$ shown in Figure \ref{f.nnG}.  This consists of, say, 18 equally spaced points at angles $(2k-1)\p/18$ to the horizontal $(k=1,\dots,18)$, on the circle of radius $\sqrt R$ with center $p+(0,\tfrac{\sqrt R}{2})$.

Hypothesis \ref{p.angle} from the implication in Definition \ref{d.scalefree} ensures that the two vertices in $T_H$ adjacent to vertices in $B(R,p)$ are significantly closer to the points $x_1,x_{18}$ than to any other points of $B(R,p)$.   Now the path drawn in Figure \ref{f.nnG} transits $S\cup Y$ optimally given that it uses these endpoints (assuming the route through $S$ is optimal), and in particular, it follows that for a sufficiently small choice of $\ep$, if the hypothesis \ref{p.atapath} and \ref{p.angle} are satisfied in Definition \ref{d.scalefree}, then the tour $T_\ninsert$ traverses $X\cap B(p,R)$ as shown in Figure \ref{f.nnG} (and transits $S$ within $\ep$ of optimally). 

Assuming this is the case, we aim to predict the precise path taken by the heuristic (in particular, in $S$). Our choice of $Y$ ensures that the subtour constructed by $\ninsert(X)$ will contain all of $Y$ before it contains any vertex from $S$; thus, we are guaranteed that the two edges leaving $B(p,R)$ are never used for insertions of points in $S_p$ (as there is always a cheaper insertion using closer edges).  Since no edges with both endpoints outside of $B(p,R)$ are used for insertions after the first insertion, all insertions of points in $S$ have both endpoints in $B(p,R)\cap X$.  In particular, we can use the nearest insertion algorithm locally to determine the resulting path: our path finding Heuristic $A_{\ninsert}$ for this case simply begins with a path from $x_1$ to $x_{18}$ through $Y$, and uses Nearest Insertion to extend this to a path through all of $B(p,R)$.  (Recall that in the case of any ties, the input set is not in general position and so we require nothing of the behavior of $A_{\ninsert}$.)

We note that with small modifications, it is not hard to extend the scalefreeness proof to the \emph{Farthest Insertion} heuristic, which inserts at minimum cost the farthest vertex from the vertex-set of the current subtour.  The main problem is just that the heuristic will visit $S$ before exhausting $Y$.  However, it will only visit one vertex of $S$ before exhausting $Y$, which means that we can (at polynomial cost) simply guess the first vertex of $S$ visited by the heuristic, and so $A_H$ will output a polynomially-long list of candidate paths.
\end{proof}

\begin{figure}[t]
\begin{center}
\begin{pdfpic}
\psset{unit=.5cm,dotsize=2pt}
\begin{pspicture}(-5,-5)(10,5)

\SpecialCoor
\dotnodes*(3;10){A}(3;30){B}(3;50){C}(3;70){D}(3;90){E}(3;110){F}(3;130){G}(3;150){H}(3;170){I}(3;190){J}(3;210){K}(3;230){L}(3;250){M}(3;270){N}(3;290){O}(3;310){P}(3;330){Q}(3;350){R}

\rput[tr](A){\small $x_1$}
\rput[br](I){\small $x_9$}
\rput[tr](J){\small $x_{10}$}
\rput[br](R){\small $x_{18}$}


\pnode(-1.5,0){point}
\pscircle(point){.15}

\psline[arrows=<->](5,.7)(A)(B)(C)(D)(E)(F)(G)(H)(I)(point)(J)(K)(L)(M)(N)(O)(P)(Q)(R)(5,-.7)
\psdot(10,-1.15)
\psdot(10,1.15)
\rput(10,-1.5){$y$}
\rput(10,1.5){$x$}

\end{pspicture}
\end{pdfpic}
\end{center}
\caption{\label{f.nnG} (Proving that the Nearest Insertion heuristic is scalefree.) The gadget $Y$ (consisting of the 18 points $x_i$) prevents the optimal tour from entering directly to a vertex in $S\sbs B(p,1)$. The boundary of $B(p,1)$, which contains all points in the set $S_p$, is drawn as the tiny circle.}
\end{figure}

\section{Proof outline}
We begin by giving a broad outline of the proof of Theorem \ref{t.length}, so that the reader can have a preview of the overall structure of the argument.  We begin by discussing the task of proving that a Heuristic satisfies the simpler and more powerful Definition \ref{d.firstscalefree} definition of scalefree.

The proof begins by leveraging Papadimitriou's reduction \cite{PapaComplete} of the $\NP$-complete Set Partition problem to the Hamilton path problem.  Given any instance of the Set Partition problem, Papadimitriou constructs a set of points and a threshold such that a tour shorter than the threshold exists if and only if the Set-Partition problem is feasible.

Next, we surround this point set with a suitable arrangement of two points to create a configuration $\cQ$, which has the property that \emph{if} the optimal tour through a set $X\supset\cQ$ transits $\cQ$ in a single pass, then in doing so, it transits the Papadimitriou set optimally (thus solving the corresponding Set Partition problem).


Next we use the path-finding Heuristic $A_H$ guaranteed to exist by Definition \ref{d.firstscalefree} to define an algorithm to solve the Set Partition problem as follows:
\begin{itemize}
\item Generate the set $\cQ$ as above corresponding to the given Set Partition instance through Papadimitriou's reduction,
\item Use $A_H$ to produce a list of paths through $\cQ$,
\item If any path is below the threshold given by Papadimitriou's reduction, return TRUE; otherwise, return FALSE.
\end{itemize}
We will show the this algorithm can be suitably adapted to the Turing machine setting with polynomial running time, despite the the obvious questions about how to deal with precision issues.

Now, if $\PP\neq \NP$, if must be the case that on some instance of the Set Partition problem, this algorithm gives the wrong answer.  Note that when it answers incorrectly, it necessarily answers FALSE.  (It can fail to find a short tour, but not incorrectly report the existence of a short tour.)

But for this instance of the Set Partition problem, the corresponding set $\cQ$ has the property that any time $\cQ\subset X$, and the Heuristic $H$ finds a tour through $X$, it will transit $\cQ$ suboptimally.

Finally, we will apply a Lemma we proved in \cite{TSPconstants}, which shows that if $X$ consists of $n$ independent uniformly random points in the unit square, than $X$ contains linearly many approximate copies of the set $\cQ$.  These approximate copies will be close enough to $\cQ$ that the tour will still transit them suboptimally, gaining some excess length for each copy, and linearly many such copies are sufficient to ensure a multiplicative $(1+\ep)$ error in the final tour length.

\bigskip

The proofs below are written for the weaker, more complicated Definition \ref{d.scalefree}.  The only important modifications to the above outline are surrounding $\cQ$ with more configurations of points, so that:
\begin{itemize}
\item the set $Y$ required by Definition \ref{d.scalefree} is present,
\item it is necessarily the case that $\cQ$ is transited in a single pass,
\item the angle of entry/exit from $Y$ is controlled, as required by Definition \ref{d.scalefree}.
\end{itemize}

\section{Asymptotic length of scalefree heuristics}\label{alsfh}

Our proof involves a multi-layered geometric construction of a configuration, which, when it exists, will ensure that it contains a special set $S$ satisfying hypotheses \ref{p.protected}, \ref{p.general}, \ref{p.exceptional}, \ref{p.atapath},  \ref{p.angle} of the implication in definition of scalefreeness.  We consider the layers one at a time.

\subsection{Papadimitriou's set $\cP$}
Our proof begins with Papadimitriou's reduction \cite{PapaComplete} to the TSP path problem from the $\NP$-complete Set Partition problem.  This will form the basis for the sets $S$ we wish to apply the definition of scalefreeness to.

Recall that an instance of the Set Partition problem is a family of subsets $A_i,i=1,2,\ldots,M$ of $[N]=\{1,2,\dots,N\}$; the decision problem is to determine whether there is a subfamily which covers $[N]$ and consists of pairwise disjoint sets.

In particular, Papadimitriou shows that for any instance $(\cA,N)$ of the set cover problem, there is a $k$ (polynomial in the size of the Set Partition problem), and a set of $k$ points $\cP=\cP(\cA,N)$ in $[0,\sqrt k]^2$ with distinguished vertices $p$ and $q$ which can be produced in polynomial time, such that for some absolute constant $\e_0$ we have that for any approximator $f_{\e_0}:\cP\to [0,\sqrt k]^2$ with $\dist(f(x),x)<\e_0$ for all $x\in \cP$ that
\begin{enumerate}[label=P\arabic*]
\item \label{Papa.ends}The shortest  TSP path on $\cP$ begins at $f_{\e_0}(p)$ and ends at $f_{\e_0}(q)$.
\item \label{Papa.thres} There is a real number $L$ such that the length of the shortest TSP path on $f_{\e_0}(\cP)$ is either less than $L$ or greater than $L+\e_0$, according to whether the Set Partition instance problem should be answered Yes or No, respectively.
\end{enumerate}
Papadimitriou's discussion does not reference an approximating function like $f_{\e_0}$; the role of this function here is to capture the imprecision which can be tolerated by the construction, which is discussed on page 241 of his paper (where one finds, for example, that we can take, e.g., $\e_0=\tfrac{\sqrt{a^2+1}-a}{100(4a^2+2a)}$ for $a=20$).

\subsection{The set $\cQ$}
Papadimitriou's construction in \cite{PapaComplete} does not have the following property, but it is easy to ensure by simple modification of his construction (using ``1-chains'' to relocate the original $p$ and $q$ to suitable locations):
\begin{enumerate}[resume, label=P\arabic*]
\item \label{P.lines} There is a rhombus $R$ with vertices $p,q$ so that all points in $f_{\e_0}(\cP\stm \{p,q\})$ lie inside $R$ and at least $\e_0$ from the boundary of $R$.
\end{enumerate}
 
When scaled to $[0,\sqrt k]^2$ as we have done here, the minimum TSP path length through Papadimitriou's set will always be less than $C_0 k$ for some absolute constant $C_0>1$ (indeed, this is true even for a worst-case placement of $k$ points in $[0,\sqrt k]^2$ \cite{Few}).  Thus, given the configuration $\cP=\cP(\cA,N)$ with $k=k(\cA,N)$ points and a small $\l>0$, we rescale $\cP$ by a factor of say, $\tfrac 1 \l C_0 k$, to produce a corresponding set of points $\bar \cP\sbs [0,\frac{\l}{C_0 \sqrt k}]^2$ which necessarily admits a TSP path of at most $\leq \l$; note that $\bar \cP$ satisfies the same properties P1-P3 above, with $\e_0$ rescaled to $\tfrac{\l \e_0}{C_0 k}$.

Finally, we modify this configuration (as indicated in Figure \ref{f.pforce}) by adding two points $x,y$ to the set.  With $\bar \cP$ centered at the origin $(0,0)$, we take $x=(-1,-\beta),y=(1,-\beta)$, where $\beta\gg\sqrt \l$ is chosen sufficiently small so that $p,q$ are the closest points on the rhombus $R$ to $x$ and $y$, respectively.  Thus $x$ and $y$ are $\tfrac{\l \e_0}{C_0 k}$ closer to $p$ and $q$ than to any other point in $\bar \cP$.  We call the resulting set $\cQ(\cA,N)\sbs [-1,1]\times [0,1]$.

Essentially, the point set $\cQ$ ensures that any optimal path passing through it will transit the Papadimitriou set optimally, by ensuring the optimal paths will only enter/exit $\cQ$ where we expect. 
\begin{figure}[t]
\begin{center}
\begin{pdfpic}
\psset{unit=.1cm,dotsize=3pt}
\begin{pspicture}(-5,-5)(5,5)

\rput{45}(0,2){\psframe(-1,-1)(1,1)}
\psdots(-1.414,2)
\psdots(1.414,2)
\rput(-1.414,0){$p$}
\rput(1.414,0){$q$}
}

\psdots(-20,0)(20,0)
\rput(-20,-2){$x$}
\rput(20,-2){$y$}
\end{pspicture}
\end{pdfpic}
\end{center}
\caption{\label{f.pforce} $\cQ$ forces a path through a Papadimitriou path.}
\end{figure}

\begin{lemma}\label{qlemma}
Let $(\cA,N)$ be an instance of the Set Partition problem, and $\cQ=\cQ(\cA,N)$ with $|\cQ|=k$. There is a sufficiently large $D_0$, such that \textbf{If} we have that
\begin{enumerate}[label=(\roman*)]
\item $\cQ\approx_\d Q\sbs [0,t]^d$ for $\d=\tfrac{\l \e_0}{10 C_0 k}$
\item$w,z\in [0,t]^d$ with $\dist(\{w,z\},Q)\geq D_0$, 
\end{enumerate}
\textbf{then} the shortest TSP path $W$ from $w$ to $z$ in $Q\cup \{w,z\}$ has the property that $W$ transits the approximate Papadimitriou set $P\approx_\d \bar \cP$ in $Q$ optimally, from $p$ to $q$.
\end{lemma}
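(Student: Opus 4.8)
The plan is to separate a trivial structural observation from the two substantive steps, the second of which is the real obstacle.

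\emph{Reduction.} The instance is just $Q\cup\{w,z\}$, and $W$ is a shortest $w$-to-$z$ tour through it; in a Euclidean (metric) instance a shortest such tour visits every point once (any repeat can be short-cut), so $W$ is a Hamiltonian path with endpoints $w,z$. Hence $W\stm\{w,z\}$ is automatically a single Hamiltonian path on $Q$, whose endpoints are the $Q$-neighbours $a$ of $w$ and $b$ of $z$. The lemma therefore reduces to two claims: (i) the unconstrained shortest Hamiltonian path on $Q$ runs $x'\to p'\to(\text{optimal Papadimitriou transit of }P)\to q'\to y'$, where $x',y',p',q',P$ denote the $\d$-images of $x,y,p,q,\bar\cP$; and (ii) attaching the far anchors $w,z$ leaves this path optimal and forces $\{a,b\}=\{x',y'\}$.

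\emph{Step (i): scale separation and the gateway gadget.} Write $Q=P\cup\{x',y'\}$, where $P\approx_\d\bar\cP$ is a cluster of diameter $O(\l/\sqrt k)$ while $x',y'$ lie at distance $\Th(1)$ from $P$ and from each other. Thus every edge meeting $x'$ or $y'$ costs $\Th(1)$, whereas all of $P$ can be traversed in total length $\le\l+O(\d)$. A Hamiltonian path spends one $\Th(1)$ edge per gateway when $x',y'$ are its endpoints and two when either is interior, so the shortest $Q$-path must take $x',y'$ as endpoints, join each to its nearest cluster vertex, and traverse $P$ optimally in between. Since $\d=\tfrac1{10}$ of the nearest-neighbour margin $\l\e_0/(C_0k)$ and each coordinate moves by at most $\d$ (so distances by at most $2\d$), $p',q'$ remain the strict nearest cluster vertices to $x',y'$, and the rescaled Papadimitriou tolerance still exceeds $\d$; hence \ref{Papa.ends} and \ref{Papa.thres} apply to $P$ and the $p'$-to-$q'$ transit is the optimal Papadimitriou path.

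\emph{Step (ii): the far anchors, and the main obstacle.} This is the crux. Freeing a gateway --- say making $y'$ interior --- costs an extra $\Th(1)$ edge, but re-routing the long edge from the far anchor $z$ to a central cluster vertex instead of to $y'$ can \emph{save} $\Th(1)$, and at first order both effects equal $\dist(y',P)$: the naive triangle-inequality comparison is exactly borderline. The resolution is a second-order argument. Because $D_0$ is large, $\dist(w,\cdot)$ and $\dist(z,\cdot)$ linearise as $D_0-\langle\cdot,\hat u\rangle+O(\diam(Q)^2/D_0)$ in the relevant direction $\hat u$, and the downward offset $\b$ placing $x,y$ below the cluster makes them the genuine extreme points: the canonical path keeps a quadratic advantage of order $\b^2$ over any re-anchoring competitor. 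Choosing the parameters in the hierarchy $\l/\sqrt k\ll\b^2$ and $1/D_0\ll\b^2$ (compatibly with $\b$ small and $D_0$ large), this advantage dominates both the $O(\l/\sqrt k)$ cost of rearranging the cluster and the linearisation error, while the preserved margins of Step (i) break the remaining exact ties in favour of $p',q'$. Summing the (at most two) per-gateway advantages then shows every $Q$-path with $\{a,b\}\ne\{x',y'\}$ yields a strictly longer $w$-to-$z$ tour, so $W\stm\{w,z\}$ is exactly the shortest $Q$-path of Step (i) and transits $P$ optimally. The genuine difficulty is precisely this leading-order tie between displacing a gateway vertex and re-anchoring it to a far point, which cannot be settled by a crude metric estimate and instead forces us to exploit the quadratic effect of the offset $\b$ together with the scale hierarchy relating $\l/\sqrt k$, $\b$, and $D_0$.
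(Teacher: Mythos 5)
Your reduction and Step (i) are sound, and they match the paper's opening observations: the shortest path covering $Q$ has endpoints $x',y'$ and length $<2+\l+2\b$, any covering path with a different endpoint pair has length at least $3-O(\l)$, and the choice of $\d$ keeps all rounding errors below $\l$. You have also correctly isolated where the difficulty lies: when $w$ and $z$ both lie far away on the same side of $Q$, nearly collinear with the line through $x',y'$, freeing a gateway and re-anchoring the long edge are tied at first order.

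But your resolution of that tie is wrong: the quadratic effect of the offset $\b$ goes in the opposite direction from what you claim. Place the cluster $P$ at the origin, $x'=(-1,-\b)$, $y'=(1,-\b)$, and let $w,z$ lie far to the left on the axis, say $z=(-D,0)$. The canonical path $w\to x'\to p'\to P\to q'\to y'\to z$ pays $\sqrt{1+\b^2}\approx 1+\b^2/2$ on each of its two gateway--cluster edges and an additional $\approx\b^2/(2(1+D))$ on the long edge $y'\to z$, whereas the competitor $w\to x'\to y'\to q'\to P\to p'\to z$ (interior endpoints $\{x',p'\}$) uses the edge $x'y'$, whose length is exactly $2$ independent of $\b$, and only one gateway--cluster edge. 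Subtracting, the canonical path is \emph{longer} by roughly $\b^2/2$: the downward offset penalizes, rather than protects, the canonical endpoints (its actual purpose in the construction is only to make $p,q$ the unique nearest cluster points to $x,y$). So the statement you set out to prove in Step (ii) --- that every $Q$-path with interior endpoints other than $\{x',y'\}$ yields a strictly longer $w$-to-$z$ tour --- is false precisely in the borderline configuration you identified, and no hierarchy among $\l/\sqrt k$, $\b$, $D_0$ can repair it. The paper does not attempt this claim. Its proof derives the inequalities \eqref{l.xy} and \eqref{l.yx} from optimality of $W$ and splits into three cases on the geometry of $w,z$: when the anchors lie on opposite sides (Case 1) or see the segment $x'y'$ at a substantial angle (Case 2) it reaches a contradiction, so there the endpoints are indeed $\{x',y'\}$; but in your borderline configuration (the paper's Case 3) it concedes that the winning endpoint pair may be $\{x',p'\}$ or $\{q',y'\}$, depending on the rounding, and instead shows, using \ref{Papa.ends}, that every competitive candidate --- $\{x',y'\}$, $\{x',p'\}$, $\{q',y'\}$ --- transits the approximate Papadimitriou set optimally. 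That weaker conclusion is what the rest of the paper actually uses, and any correct proof must make the same retreat, since the stronger endpoint claim fails.
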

Sets $Q\approx_\d \cQ$ will serve as the sets $S$ to which we apply the definition of scalefreeness.  Very roughly speaking, we will eventually be aiming to contradict $\PP\neq \NP$, since a polynomial-time algorithm to predict optimal paths through $Q$'s would seem to solve the Set Partition problem in polynomial time.
\begin{proof}
Recall that $\cQ$ is constructed by adding two points $x,y$ to the set $\bar \cP$. By construction, the shortest path covering $\cQ$ has endpoints $x,y$, and is of length $<2+\l+2\b$.   Moreover, it is apparent that any path covering $\cQ$ which does not have the endpoint pair $\{x,y\}$ has length at least $3$.  Finally, our choice of $\d$ ensures that the accumulated error in path-lengths when comparing paths in $Q$ vs $\cQ$ is less than $2(k+1)\d<\l$.  Now we suppose that in the shortest path $W$, $w$ is adjacent to $a$ and $z$ is adjacent to $b$, where $\{a,b\}\neq \{x',y'\}\sbs Q$, where $x',y'$ correspond to $x,y\in \cQ$.  Since $W$ is shortest, we must have that
\[
\dist(w,a)+\dist(z,b)+3\leq \dist(w,x')+\dist(z,y')+2+2\l+2\b,
\]
and so
\begin{equation}\label{l.xy}
\dist(w,a)+\dist(z,b)\leq \dist(w,x')+\dist(z,y')-1+2\l+2\b.
\end{equation}
Similarly, we have
\begin{equation}\label{l.yx}
\dist(w,a)+\dist(z,b)\leq \dist(w,y')+\dist(z,x')-1+2\l+2\b.
\end{equation}
So we suppose now that \eqref{l.xy} and \eqref{l.yx} hold simultaneously.  Moreover, let us assume without loss of generality that we have one of the following three cases:\\ 
\textbf{Case 1:}  $\dist(w,x')<\dist(w,y')$ and $\dist(z,y')<\dist(z,x')$, or\\
\noindent\textbf{Case 2:} $\dist(w,x')<\dist(w,y')$ and $\min\{\angle w x'y',\angle wy'x'\}$ is at least $\g>0$, for some $\g$ depending on $\b$, or\\
\noindent \textbf{Case 3:} $wx'$ and $zp$ is the shortest pair of independent edges joining $\{w,z\}$ to $Q$.

Before finishing the proof for each case, let's verify that for $D_0$ large, these cases do indeed cover all cases.  If either $\min\{\angle w x' y',\angle w y' x'\}$ or $\min\{\angle z x'y',\angle z y' x'\}$ are at least $\g$, then we are already in Case 2, by appropriate choices of the labels $w,z,x',y'$ from the available pairs.  If on the other hand both $\min\{\angle w x' y',\angle w y' x'\}$ and $\min\{\angle z x'y',\angle z y' x'\}$ are at most $\g$, then either the angle from $w$ to the center of $Q$ to $z$ is in $(\pi-\g,\pi+\g)$, and we are in Case 1 with the correct choice of which endpoints of $Q$ are called $x',y'$, or the angle is less than $\g$, and we are in Case 3 with a suitable choice of labels.

\noindent \textbf{Case 1:} In this case, making $D_0$ large ensures that $\dist(w,x')-\dist(w,\a)$ and $\dist(z,y')-\dist(z,b)$ are bounded by a number arbitrarily close to 0, violating \eqref{l.xy} (or \eqref{l.yx}, if we had flipped the roles of $w$ and $z$).  So this case cannot occur in simultaneously with \eqref{l.xy} and \eqref{l.yx}.

\noindent \textbf{Case 2:} The second condition of this case implies that for any $\a\in Q$, we have that $\dist(w,x')<\dist(w,a)+\l$ (for sufficiently large $D_0$), which allows us to modify \eqref{l.xy} to the inequality
\beq{98}
\dist(z,y')-\dist(z,b)\geq 1-3\l-2\b.
\eeq
But $z$ is at at least some fixed positive angle from the line through $x,y$ (to which all points in $Q$ are arbitrarily close).  Thus, making $D_0$ large ensures that $\dist(z,a)$ varies by an arbitrarily small amount as we vary $\a\in Q$, contradicting \eqref{98}.  

\noindent \textbf{Case 3:}  Recall that the shortest path on $Q$ goes from $x$ to $p$, then through $\bar \cP$ optimally to $q$, and then ends at $y$.  In particular, this path has endpoint pair $\{x,y\}$.

Let us consider the lengths of shortest paths through $Q$ for choices of endpoints other than $\{x,y\}$.  In particular, we claim that the next-best pair of endpoints is either $\{x,p\}$, and $\{q,y\}$.  For the pair $x,p$, the short path goes from $x$ to $y$ to $q$, through $\bar \cP$ to $p$.  The path for the pair $y,q$ goes from $y$ to $x$ to $p$, through $\bar \cP$ to $q$ (which of these choices gives rise to a shorter path depends on the precise rounding $Q\approx \cQ$).  These paths both have length $3+\beta^2/2+\Theta(\lambda+\beta^3)$ (here we use that $\sqrt{1+\beta^2}\approx 1+\beta^2/2$).  And now we will show that any path with a pair of endpoints other than $\{x,y\}, \{x,p\},$ or $\{q,y\}$ will be longer than any of these choices.  

We prove this as follows.  First let us consider the case where both endpoints lie in $\bar \cP$.  In this case, both $x$ and $y$ are internal vertices of the path through $Q$, and thus it has length at least 4.  Thus we may assume that one endpoint is $x$, while the other is some vertex $v\in \bar \cP$.   We consider two cases: suppose from $v$ the path transits all of $\bar \cP$ before leaving $\bar \cP$; in this case, the optimal choice (over all choices for $v\in \bar \cP$) is clearly to begin from $p$, transit $\bar \cP$ optimally ending in $q$, proceed to $y$, and then return to $x$, as before.   Suppose instead that from $v$, the path visits some of $\bar \cP$, then visits $y$, then returns to visit the rest of $\bar \cP$, before exiting to $\bar \cP$ to $x$.  In this case, the total length used is roughly $3+2\beta^2/2+\Theta(\lambda)$.

Thus, since $wx$ and $zp$ is the shortest pair of independent edges joining $\{w,z\}$ to $Q$, we must have either that the shortest path from $w$ to $z$ through $Q$ uses these edges and and takes the optimum path in $Q$ from $x$ to $p$, or else that it takes a path in $Q$ which is shorter than this optimum path from $x$ to $p$.  But from above, we see that each such choice transits $\bar \cP$ optimally.
\end{proof}

\subsection{The set $\cM_H$} We use $\cQ$ to construct a larger set $\cM_H$.  $\cM_H$ consists of: 
\begin{enumerate} 
\item a copy of $\cQ$, rescaled by a factor $\a$ to lie in $B(0,1)$,  and
\item the set $Y=Y_H\sbs B(0,\sqrt R)$ from the definition of scalefreeness.
\end{enumerate}
The set $\cM_H$ ensures that hypothesis \ref{p.protected} of Definition \ref{d.scalefree} is satisfied at points centering approximations of $\cM_H$.  

\subsection{Perturbing input $\cM_H$ for simulation of $A_H$ by a Turing machine}
The heart of our proof will consist of using the path-finding heuristic $A_H$ to attempt to find a good path through the set $\cM_H$, which would require solving the Papadimitrou set, and contradicting the assumption that $\PP\neq \NP$.  We will show that the failure of $A_H$ to transit $\cM_H$ optimally leads to a significant excess length in the tour $T_H$ found by the heuristic $H$, since a significant number of suitable approximate copies of $\cM_H$ occur throughout the random point set, just by chance.

For the behavior of $A_H$ on $\cM_H$ to reliably predict the behavior of the approximate copies throughout the tour $T_H$, however, we will need to know that the behavior of $A_H$ is stable to small perturbations in the positions of the points in $\cM_H$.  Indeed, this may not be the case in general, but at least, we want to know that we can efficiently, with a deterministic algorithm, perturb the points of $\cM_H$ to produce a set $\cM^g_H$ on which the behavior of $A_H$ is stable with respect to (even smaller) perturbations.

This leads to the following question, which one might hope would be easier than resolving the computational status of comparing sums of radicals:
\begin{question}Given $K$, is there an $L$ and a polynomial-time deterministic algorithm which takes as input a set $X$ of $N$ points (at polynomial precision), and a parameter $\ep=2^{-N^K}$, and outputs a perturbed set of points $X'\approx_\ep X$, such that for all nontrivial choices of $\xi_i\in \{-1,0,1\}$, we have that $\abs{\sum \xi_i x_i}\geq 2^{-N^L}$?
\end{question}
This would imply that we can ``round'' the points of $X$ so that they are not only in general position, but have that property that any small perturbation of them would be equivalent from the standpoint of the signs of sums of the form $\sum \xi_i d_i$.

We do not need to answer to this question here, however.  Instead, we take advantage of the fact that $A_H$ is required to terminate after only polynomially many comparisons, to prove the following Lemma, which offers an easier way out:

\begin{lemma}\label{l.detround}
  For a polynomial path-finding heuristic $A_H$, a set $Z$ of $N$ points, and a constant $K$, we can determinstically, in polynomial time, find a set $Z'\approx_\ep Z$ for $\ep=2^{-N^K}$ such that each of the polynomially many sums $\sum_{i=1}^{\binom N 2} \xi_i d_i$ evaluated by $A_H$ on input $Z'$ will be separated from $0$ by at least $2^{-N^L}$, for a constant $L$ depending only on $K$ and $A_H$.
\end{lemma}

Thus this Lemma accomplishes the same thing as a positive answer to the previous question would, but only for sums $\sum \xi_i d_i$ which will actually be used by $A_H$ on the rounded set.

\begin{proof}
  Recall that the path-finding heuristic is an algorithm which runs in polynomial time in the number of points in its input, with access to a blackbox for making comparisons among sums of distances between input points. The Lemma we are proving claims the existence of a deterministic Turing machine to round the points of the input set $Z$.  To carry out this procedure, we begin by simply running $A_H$ on the input $Z$.  At each time $t=1,2,\dots,\poly(N)$ when a comparison $C=\sum{\xi_i^{(t)} d_i}$ is requested of the blackbox, we do the following:
  \begin{enumerate}
  \item Compute $\sum{\xi_i^{(t)} d_i}$ to precision $2^{-(N^{K}+10t+8)}$.
  \item IF the result lies within $2^{-(N^{K}+10t+4)}$ of 0, THEN:
    \begin{description}
    \item[{\bf Step A}] Change the position of some points by at most (in total) distance $2^{-(N^K+10t+2})$ so that after the change, the same sum differs by at least $2^{-(N^K+10t+4)}$ from 0.
    \end{description}
  \item Accept the resulting status (positive/negative) of the sum as if it had been returned by the blackbox, and continue running $A_H$.
  \end{enumerate}

  (The implementation of Step A is discussed below.) In this way, as $A_H$ runs on $Z$, we perturb the input points repeatedly so that all comparisons requested of the blackbox can be distinguished from 0 by a Turing machine operating at polynomial precision.

Consider some comparison step $t$ after which the $t$th sum $\sum \xi_i^{(t)} d_i$ was guaranteed to differ by at least $2^{-(N^K+10t+4)}$ from 0.  In each subsequent step, this $t$th sum will change slightly, as points of the input are perturbed further.  However, by the triangle inequality, the total impact on this $t$th sum of later perturbations is bounded by
\[
\sum_{\ell=t+1}^\infty 2^{-(N^K+10\ell+2)}=2^{-(N^K+10t+2)}\sum_{\ell'=1}^\infty 2^{-10\ell'}\leq 2^{-(N^K+10t+11)},
\]
which is smaller than the difference $2^{-(N^K+10t+4)}$ guaranteed at step $t$ by the step $t$ perturbation.  As a result, at the end of the this procedure, (after polynomially many perturbations have been carried out, for each of the polynomially many steps of the run of $A_H$), each of the comparisons which $A_H$ requested can be reliably computed on the perturbed point set by a Turing machine simulating $A_H$ at polynomial precision $2^{-N^L}$ for a suitable constant $L$, which depends just on $K$ and the polynomial running time of $A_H$.
\end{proof}

Given $\cM_H$, we let $\cM_H^{g}$ denote the the set resulting from the perturbation procedure of Lemma \ref{l.detround}.
\subsubsection*{Implementing Step A}\label{bb}
We first choose two points $p,q$ such that there exists $i$ such that $\xi_i^{(t)}$ is not zero. Let $d_i=d=|p-q|$. WLOG we let $p=(0,0)$ and $q=(d,0)$.   Let $\r=2^{-(N^K+10t+3)}$ and define the points $p'=p-(\r,0),\,q'=q+(\r,0)$. 

Next let $p_j,j=1,2,\ldots,K$ be the points, other than $q$, such that for some $i$ we have $d_i=|p-p_j|$ and $\xi_i^{(t)}\neq 0$. Let $d_{j}'=|p_j-p'|$ for $i=1,\ldots,K$ and then re-defining $d_j=|p-p_j|$ we let
\[
\D(p)=\left|\sum_{j=1}^K\xi_j^{(t)}(d_j-d_{j}')\right|.
\]
We observe that replacing $p$ by $p'$ results in a new comparison $C_p'$ where
\beq{comp1}
C_p'-C=\D(p)+\z_p\text{ where } \z_p=|p'-q|-|p-q|.
\eeq
Similarly, if we replace $q$ by $q'$ then we obtain a new comparison $C_q'$ where
\beq{comp2}
C_q'-C=\D(q)+\z_q\text{ where }\z_q=|p-q'|-|p-q|.
\eeq
We now consider two cases:\\
{\bf Case 1:} $\max\set{|C_p'|,|C_q'|}\geq \r/2$.\\
In this case the move $p\to p'$ or the move $q\to q'$ implements Step A.

{\bf Case 2:} $\max\set{|C_p'|,|C_q'|}< \r/2$.\\
Suppose now we move $p\to p'$ and $q\to q'$ to obtain a comparison $C_{p,q}'$. Then we have
\beq{comp3}
C_{p,q}'=\D(p)+\D(q)=C_p'+C_q'-(\z_p+\z_q).
\eeq
But,
\[
\z_p=\z_q=\r,
\]
So, from \eqref{comp3}, we see that 
\[
|C_{p,q}'|\geq 2\r-\r=\r,
\]
and so moving $p\to p'$ and $q\to q'$ implements Step A.
\subsection{The set $\Pi_H$} \label{s.angle} We now construct a certain set $\Pi=\Pi(k)$ and show that it constrains the optimal tour in a useful way.  In particular, we let $\Pi(k)$ consist of the four points $\pi_1=(0,5),\: \pi_2=(0,0),\: \pi_3=(1,0),\: \pi_4=(1,5)$ together with all the points $(\frac 1 2, \frac {5j} k)$ for $0\leq j\leq k$ (see Figure \ref{f.Pi}).

\begin{figure}[t]
\begin{center}
\begin{pdfpic}
\psset{unit=1,dotsize=3pt}
\begin{pspicture}(1,2.5)(0,0)
\psdots(0,0)(2.5,1)(2.5,0)(0,1)
\psdots(0,.5)(.25,.5)(.5,.5)(.75,.5)(1,.5)(1.25,.5)(1.5,.5)(1.75,.5)(2,.5)(2.25,.5)(2.5,.5)
\end{pspicture}
\end{pdfpic}
\end{center}
\caption{\label{f.Pi} The set $\Pi(10)$ (rotated 90 degrees).}
\end{figure}

\begin{lemma}\label{l.Piset}
If $\dist(\{x,y\},\Pi(k))$ is sufficiently large and $P$ is a shortest Hamilton path from $x$ to $y$ in $\Pi(k)\cup \{x,y\}$, then for at least one $i\in \{1,2,3,4\}$ we have that neither neighbor $v_i^1,v_i^2$ of $\pi_i$ on $P$ is in $\{x,y\}$, and moreover that $\dist(\pi_i,v_i^1)$ and $\dist(\pi_i,v_i^2)$ become arbitrarily close to $\frac 1 2$ has $k$ increases; in particular, the neighbors of $\pi_i$ are nearly horizontal translates of $\pi_i$, lying on the line $x=\frac 1 2$.\qed
\end{lemma}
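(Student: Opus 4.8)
The plan is to pin down the local structure of the shortest path $P$ at the corners by combining a pigeonhole count, the elementary distance geometry of $\Pi(k)$, and the fact that a Euclidean shortest path is optimal (hence in particular non-self-crossing).

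First I would observe that, since the endpoints of $P$ are $x$ and $y$ and these lie outside $\Pi(k)$, every vertex of $\Pi(k)$ is internal to $P$ and so has exactly two path-neighbours; moreover only two vertices of $\Pi(k)$ — the neighbour of $x$ and the neighbour of $y$ — can have a neighbour in $\{x,y\}$ (these are distinct since $|\Pi(k)|=k+5\geq 2$). As there are four corners $\pi_1,\dots,\pi_4$, at least two of them are \emph{interior}, i.e. have both neighbours inside $\Pi(k)$. This already yields the first assertion; it remains to locate an interior corner whose two neighbours are the column points closest to it. Next I would record the relevant distances: for a corner $\pi_i$ the nearest points of $\Pi(k)$ are column points, a column point differing in height from $\pi_i$ by $\delta$ lying at distance $\sqrt{\tfrac14+\delta^2}$, which tends to $\tfrac12$ exactly when $\delta\to 0$, whereas every other corner is at distance $\geq 1$. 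Thus the desired conclusion ``both neighbours at distance $\to\tfrac12$ on the line $x=\tfrac12$'' is precisely the statement that some interior corner is inserted as a \emph{bump} between two column points at height $o(1)$ from it, and any edge from a corner to another corner, or to a column point bounded away from its height, exceeds $\tfrac12$ by a positive constant.

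The core of the argument is then an exchange/optimality step. For an interior corner, if one of its two edges ran to another corner or to a far column point, that edge would beat $\tfrac12$ by a constant, whereas re-routing the corner into the nearest free gap of the column (splitting a column edge of length $\tfrac5k$ into two edges of length $\to\tfrac12$) produces a strictly shorter competing path; the planarity of $P$ is used to keep the re-routing non-crossing and the resulting object an honest Hamilton path from $x$ to $y$. The one delicate case is when the two interior corners form a \emph{same-height} pair (say the top pair), since then joining them by a single edge and hanging the pair off the column (a ``combined bump'') costs only $\Theta(1/k)$ more than inserting them as two separate bumps sharing one column point: the separate insertion deletes two column edges of length $\tfrac5k$ rather than one, while the added edge-length is the same to leading order. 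Because $P$ is \emph{exactly} shortest, this strictly positive $\Theta(1/k)$ gap still forbids the combined configuration at every finite $k$, so at least one (in fact both) corners of the pair become genuine clean bumps. In all cases we obtain an interior corner both of whose neighbours are column points at height $o(1)$ from it, and letting $k\to\infty$ their distances tend to $\tfrac12$, which is the claim.

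I expect the main obstacle to be exactly this exchange step: making the re-routing globally valid (so that it yields a Hamilton path from $x$ to $y$, not merely a shorter local gadget) and correctly handling the same-height pair, where the improving margin is only $\Theta(1/k)$ and one must reason about the exact optimum rather than an asymptotic one. The role of the hypothesis that $\dist(\{x,y\},\Pi(k))$ is large is to guarantee that the two long tails of $P$ attach at just two vertices and otherwise do not interfere with this local analysis, and the non-crossing property of Euclidean shortest paths is the principal tool keeping the casework under control.
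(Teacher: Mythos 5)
First, a remark on the comparison itself: the paper states Lemma~\ref{l.Piset} with a \qed{} and gives no proof, so your attempt must be judged on its own merits, against what a complete argument would require. Your pigeonhole step (at least two corners $\pi_i$ have both path-neighbours inside $\Pi(k)$, since only the two vertices adjacent to $x$ and $y$ can touch $\{x,y\}$) is correct, your distance inventory is correct, and you are right that a same-height pair of interior corners is delicate, with only a $\Theta(1/k)$ margin. The genuine gap is in your central exchange step. When you lift an interior corner $\pi$ out of $P$ and splice its former neighbours $u,v$ together, the saving is not $\dist(\pi,u)+\dist(\pi,v)$ but the triangle excess $\dist(\pi,u)+\dist(\pi,v)-\dist(u,v)$, and this can be far smaller than the reinsertion cost $\approx 1-\tfrac5k$ even when one incident edge ``beats $\tfrac12$ by a constant.'' Concretely, take $\pi=\pi_1=(0,5)$ with path-neighbours $u=(\tfrac12,5)$ and $v=(\tfrac12,\tfrac52)$ (a far column point, so $\pi_1$ is ``dirty''): the removal saves only $\tfrac12+\sqrt{6.5}-\tfrac52\approx 0.55$, so your proposed move \emph{lengthens} the path by roughly $0.45$. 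Yet a path containing this configuration is only about $\sqrt{6.5}-\tfrac52\approx 0.05$ longer than optimal (as one sees by comparing with a globally rebuilt path, e.g.\ one sweeping the column up and down on alternate points), so no remove-and-reinsert move localized at $\pi$ can ever detect its suboptimality. Excluding exactly these ``long edge at a corner'' configurations is the real content of the lemma, and it forces a global argument --- for instance, lower-bounding the length by counting crossings of horizontal lines (the vertical travel forced by the column, whose parity depends on where $x,y$ sit) plus the horizontal detours forced by the four corners, and comparing against explicitly constructed competitor paths --- rather than the purely local exchange you propose.

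Two smaller points. Your insertion step also presupposes a ``free gap,'' i.e.\ that $P$ actually contains an edge joining two consecutive column points near the corner's height; this is true for shortest paths but needs its own (fixable) argument, since an arbitrary Hamilton path need not contain any column edges. Conversely, the non-crossing property you lean on is unnecessary: deleting a vertex from a Hamilton path and re-inserting it into another edge always yields a valid Hamilton path with the same endpoints, crossings or not, so that part of your machinery can be discarded.
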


We now define $\Pi_H$ as follows.  We take four copies of the set $\cM_H^{g}$, each rescaled to lie in small balls of radius $\e_\Pi>0$.  In $\Pi(k)$ (for $k=k(H)$ sufficiently large), we replace the four points $\pi_i$ with these copies; those corresponding to $\pi_3$ and $\pi_4$ are reflected horizontally.  (In particular, the resulting set $\Pi_H$ still has horizontal reflection symmetry.)

Suppose we take this $\e_\Pi$ suitably small, that  $U\approx_{\d_H} \Pi_H$ for sufficiently small $\d_H>0$, and that the optimal tour on $X\supseteq U$ transits $U$ in a single path.   Then each approximate copy of $\cM_H^{g}$ in $U$ is transited in single path by the optimal tour (corresponding to hypothesis \ref{p.atapath} from Definition \ref{d.scalefree}) and moreover, Lemma \ref{l.Piset} implies that for at least one of the four copies $M_1,M_2,M_3,M_4$ of $\cM_H^{g}$, $H$ either satisfies hypothesis \ref{p.angle} or else pays an additive error.

\subsection{The set $\Pi_H^3$} We let $\Pi_H^3$ denote 3 copies of $\Pi_H$ centered at the vertices of an equilateral triangle of sidelength $2D_1$, say.   This triple configuration ensures that the optimum tour will transit at least one of the copies of $\cM_H^{g}$ in a a single pass.  Indeed, Observations 2.9 and 2.10 from \cite{TSPconstants} now give the following:
\begin{lemma}
\label{l.P3}
Suppose that $D_2$ is a sufficiently large absolute constant, $(\cA,N)$ is an instance of the Set Partition problem, and $\Pi_H^3=\Pi_H^3(\cA,N)$.
\textbf{If} $\Pi_H^3\approx_{\d_0} Z\sbs X\sbs [0,t]^d$ for ${\d_0}=\tfrac{\alpha\e_0}{10^4 C_0 D_0k}$ and  $\dist(Z,X\stm Z)\geq D_2$, \textbf{then} any TSP tour $T$ on $X$ can either be shortened in $\Pi_H^3$ by some additive constant or otherwise has the property that at least one of the (approximate) copies of $\Pi_H$ in $Z$ is traversed (optimally) by a path by $T$.\qed
\end{lemma}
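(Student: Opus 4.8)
The plan is to collapse the gadget to a four-vertex multigraph and argue that a tour admitting no local shortening must single-pass one of the three copies. First I would contract each of the three copies $C_1,C_2,C_3$ of $\Pi_H$ inside $Z$ to a point and contract $X\stm Z$ to a single node $E$; an edge of the tour $T$ internal to a part is discarded and an edge crossing between two parts is retained, producing a connected Eulerian multigraph $G$ on $\{C_1,C_2,C_3,E\}$. Writing $c_i$ for the number of maximal subpaths of $T$ lying inside $C_i$, we have $\deg_G(C_i)=2c_i$, so ``$C_i$ is traversed by a single path'' is exactly ``$c_i=1$''. The three length scales are: intra-cluster edges of length $O(1)$, inter-cluster edges of length $2D_1\pm O(1)$, and cluster-to-$E$ edges of length at least $D_2-O(1)$. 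Since $D_1$ is an absolute constant and $D_2$ is chosen enormous relative to $D_1$, it is this hierarchy $O(1)\ll D_1\ll D_2$ that drives the argument.

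The combinatorial core is a dichotomy for the cyclic sequence $W$ recording, in tour order, the labels of the successive cluster-visits (collapsing every excursion into $E$). One has $|W|=\sum_i c_i$, and consecutive entries of $W$ are joined either by an inter-cluster edge or by an excursion through $E$. Suppose, toward the second alternative of the lemma, that $c_i\ge 2$ for every $i$, so $|W|\ge 6$ and no symbol occurs just once. A short check on proper cyclic words over three symbols shows that $W$ must either contain a return pattern $i,j,i$ or else be a rotation $1,2,3,1,2,3,\dots$. In the rotation case some directed inter-cluster edge, say $C_1\to C_2$, is used twice with the same orientation across the cut $\partial C_1$; the two corresponding crossing edges can be uncrossed into two edges internal to $C_1$ and $C_2$, shortening $T$ by $4D_1-O(1)$. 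In the return case the middle $C_j$-visit is a round trip out of $C_i$; since $c_j\ge 2$, this $C_j$-chunk can be spliced locally into another visit of $C_j$, and the two inter-cluster edges of the round trip deleted at the cost of $O(1)$-length edges internal to $C_i$ and $C_j$, again shortening $T$ by $4D_1-O(1)$. Either way we obtain a length-decreasing patch confined to two clusters, i.e. a shortening of $T$ in $\Pi_H^3$ by a positive additive constant. This is the mechanism I would attribute to Observations 2.9 and 2.10 of \cite{TSPconstants}.

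The main obstacle is that some connections counted in $W$ pass through $E$ rather than directly between clusters, and the naive merge of a round trip that enters $E$ would require rejoining two external arcs of $T$ by an uncontrolled, possibly very long edge—so it need not shorten. This is exactly where the scale gap is used: because each cluster-to-$E$ edge already costs at least $D_2-O(1)\gg D_1$, an optimal tour has no incentive to route locally through $E$ when an inter-cluster connector of length $2D_1$ is available, and Observation 2.9 controls the interaction of the far super-cluster $Z$ with the bulk $X\stm Z$ so that the word dichotomy may be applied to the genuinely inter-cluster connections. Verifying that this reduction is legitimate—that a tour with all $c_i\ge 2$ always exposes a purely local, strictly profitable patch despite the possibility of multiple far entries—is the crux, and it is the content imported from \cite{TSPconstants}.

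Finally I would upgrade ``single path'' to ``optimal path''. Once some copy $C_i$ has $c_i=1$, it is traversed by one subpath $P$ of $T$ between two endpoints; if $P$ were not a shortest Hamilton path through $C_i$ with those endpoints, replacing $P$ by the optimal one shortens $T$ using only edges internal to $C_i$, landing again in the first alternative. Hence either $T$ can be shortened in $\Pi_H^3$ by an additive constant, or at least one of the approximate copies of $\Pi_H$ in $Z$ is traversed optimally by a single path, as required.
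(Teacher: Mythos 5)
The paper itself gives no self-contained argument for this lemma: it is stated as a direct consequence of Observations 2.9 and 2.10 of \cite{TSPconstants}, which (roughly) show that an unshortenable tour makes at most two passes through the well-separated set $Z$, with near-straight entry/exit geometry for each pass; the intended derivation is then a case analysis of how at most two such passes can meet the three copies of $\Pi_H$. Your proposal tries to replace this with a word-combinatorics argument, and its self-contained core has a genuine gap. Your two patches (2-opt uncrossing of a repeated same-orientation inter-cluster edge, and splicing of an $i,j,i$ round trip) apply only when the relevant connections are genuine inter-cluster edges of length about $2D_1$. But the tour must visit $X\stm Z$, which contains almost all of $X$, so at least two of the connections in your cyclic word are excursions through $E$ --- the ``pure'' case your dichotomy addresses never occurs --- and these excursions can sit exactly where your patches would need direct edges. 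Concretely, take a tour making two passes through $Z$, one traversing the copies in order $1,2,3$ and the other in order $3,2,1$, each pass entering from and exiting to $E$. Then all $c_i=2$; every inter-cluster edge that is repeated is repeated with \emph{opposite} orientations, so the 2-opt exchange produces two new crossing edges and saves nothing; and there is no $i,j,i$ pattern among the direct connections. Such a tour is in fact shortenable, but by a move absent from your toolkit: keeping the passes and their crossing edges but reassigning the portions of the three copies among them, charging the $\leq \diam(Z)=2D_1+O(1)$ displacement of each crossing endpoint against the $\geq 2D_1$ saved per eliminated inter-cluster transition. Showing this accounting always comes out strictly positive is precisely where the pass-count and angle control of Observations 2.9--2.10 enter; it is the crux, and your proposal both defers it and mischaracterizes it (``an optimal tour has no incentive to route locally through $E$'' --- the lemma quantifies over arbitrary tours $T$, and routing through $E$ is mandatory, not a matter of incentive).

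A second, smaller gap: your final upgrade from ``traversed in a single path'' to ``traversed optimally'' replaces the path by a shortest Hamilton path with the same endpoints, but this shortens $T$ only by the suboptimality gap, which need not be bounded below by any additive constant, so it does not return you to the first alternative of the dichotomy. In the paper that dichotomy is not generic: it comes from the structure of the gadgets (Lemma \ref{qlemma} and property \ref{Papa.thres}), under which a single-pass transit failing to cover the embedded Papadimitriou set optimally can be shortened by a fixed constant determined by the gadget geometry.
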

We emphasize that the constants $\alpha\leq 1\ll D_0\ll D_1\ll D_2$ are absolute, independent of $(\cA,N)$.   

\begin{figure}[t]
\begin{center}
\begin{pdfpic}
\psset{unit=.03cm,dotsize=1.5pt,linewidth=.1pt}
\begin{pspicture}(-150,0)(150,200)
\rput{90}(-80,0){
\pscircle[linewidth=.1pt](0,0){.04cm}
\pscircle[linewidth=.1pt](25,10){.04cm}
\pscircle[linewidth=.1pt](25,0){.04cm}
\pscircle[linewidth=.1pt](0,10){.04cm}
\psdots(0,5)(2.5,5)(5,5)(7.5,5)(10,5)(12.5,5)(15,5)(17.5,5)(20,5)(22.5,5)(25,5)
}

\rput{90}(80,0){
\pscircle[linewidth=.1pt](0,0){.04cm}
\pscircle[linewidth=.1pt](25,10){.04cm}
\pscircle[linewidth=.1pt](25,0){.04cm}
\pscircle[linewidth=.1pt](0,10){.04cm}
\psdots(0,5)(2.5,5)(5,5)(7.5,5)(10,5)(12.5,5)(15,5)(17.5,5)(20,5)(22.5,5)(25,5)
}

\rput{90}(0,138.56){
\pscircle[linewidth=.1pt](0,0){.04cm}
\pscircle[linewidth=.1pt](25,10){.04cm}
\pscircle[linewidth=.1pt](25,0){.04cm}
\pscircle[linewidth=.1pt](0,10){.04cm}
\psdots(0,5)(2.5,5)(5,5)(7.5,5)(10,5)(12.5,5)(15,5)(17.5,5)(20,5)(22.5,5)(25,5)
}



\end{pspicture}
\end{pdfpic}
\end{center}
\caption{\label{f.M} $\Pi_H^3$, consisting of three copies of $\Pi_H$, forces an optimum tour to transit one of 12 copies of $\cM_H^g$ (indicated here as the small circles) in a single pass, from a narrow prescribed angle.}
\end{figure}

\subsection{Using a TSP Heuristic to solve the Set Partition problem}
\label{s.solver}
At this point, in preparation for the proof of Theorem \ref{t.length}, we use the scalefree heuristic $H$ to define the following polynomial time algorithm to solve a Set Partition instance $(\cA,N)$, which will be correct unless Theorem \ref{t.length} holds for $H$.  (Thus, $\PP\neq \NP$ will imply Theorem \ref{t.length}.)

\begin{enumerate}
\item Compute $\cM_H(\cA,N)$ to precision $\frac{\d_0}{4}$.  Using Lemma \ref{l.detround}, perturb the points as necessary to produce a set $M$, for which the path-finding heuristic $A_H$ can be simulated by a Turing machine in polynomial-time.
\item \label{step.enum} Produce a list of paths through $M$ using the algorithm $A_H$.  Because of the rounding $M$ produced by Lemma \ref{l.detround}, comparisons of sums of distances can be done with Turing machine operations.
\item \label{step.L'}Let $L'$ be the minimum length of a path covering one of the three Papadimitriou sets which is a subpath of one of the paths enumerated in step \ref{step.enum}.
\item Return TRUE/FALSE according to whether $L'$ lies within or above the threshold given by \ref{Papa.thres}, respectively.
\end{enumerate}


\subsection*{Proof of Theorem \ref{t.length}}
We suppose that $H$ is scalefree.  $\PP\neq \NP$ implies that there is some instance $(\cA_0,N_0)$ of the Set Partition problem for which the algorithm above returns an incorrect answer.  Observe first that it cannot happen that the algorithm returns TRUE when the correct answer to the Set Partition instance is FALSE: when $(\cA_0,N_0)$ is FALSE, property \ref{Papa.thres} implies that there can be no path through the Papadimitriou sets shorter than the threshold below which the algorithm above would return TRUE.

Thus we are to consider the case that the algorithm above returns FALSE even though the correct answer to $(\cA_0,N_0)$ is TRUE.  In this case, no path enumerated by $A$ transits $M$ in such a way that a Papadimitriou set is traversed optimally.  

We prove Theorem \ref{t.length} by showing that this implies there exists an $\e_H>0$ so that the length of the tour found by $H$ through the random set $\ydn\sbs [0,t]^d$ is w.h.p at least $(1+\e_H)$ times the length of the optimal tour $T$.

An $(\e,R)$-copy of $Z$ in $\ydn\sbs [0,t]^d$ is a set $Z'\sbs \ydn$ such that $Z'\approx_\e Z$ and such that $\dist(Z',\ydn\stm Z')\geq R$.  The following Lemma shows that we find a linear number of $(\ep,R)$ copies of any fixed finite set in the random set $\ydn$ (see Observation 3.1 from \cite{TSPconstants}):
\begin{lemma}\label{l.findit}
Given any finite point set $S$, any $\e,\d>0$, and any $R$, we have that the number $\zeta_S$ of $(\e,R)$-copies $S'$ of $S$ in a random set $X=\ydn\sbs [0,t]^d$, such that the points of $S'$ are $\delta$-distance separated, satisfies 
\beq{zeta}
\zeta_S\geq C_{S,R,\e} n\quad{\text{w.h.p.}}
\eeq
for some constant $C_{S,R,\e}>0$.\qed
\end{lemma}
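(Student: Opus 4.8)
The plan is to realise $\zeta_S$ as a sum of bounded, essentially independent indicator variables and to conclude by a second-moment argument. Fix the deterministic partition of $[0,t]^d$ into cubes of a constant sidelength $\ell$ (chosen to be a large enough multiple of $R$ that $\ell\ge \diam(S)+2R+1$), coloured with $3^d$ colours as described, and let $\cF$ be the first colour class. Since $t=n^{1/d}$ and $\ell$ is a constant, $N:=|\cF| = (1+o(1))\,n/(3\ell)^d = \Theta(n)$. For $B\in\cF$ let $I_B$ be the indicator that $B$ carries an $R$-aligned $(\e,R)$-copy of $S$ (the lexicographically first one, if several occur), so that $\zeta_S=\sum_{B\in\cF}I_B$. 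The crucial structural point is that $I_B$ is a function of $\ydn\cap D_B$ alone, where $D_B=\{x:\dist(x,B)\le R\}$ is the $R$-dilation of $B$: any copy $S'\sbs B$ has its whole $R$-neighbourhood contained in $D_B$, so both the requirement $S'\approx_\e S$ and the isolation requirement $\dist(S',\ydn\stm S')\ge R$ are determined by the points of $\ydn$ lying in $D_B$, a region of constant volume.

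First I would compute $\E[\zeta_S]=N\,\Pr[I_B=1]$ and show that $p:=\Pr[I_B=1]$ tends to a positive constant. The count $|\ydn\cap D_B|$ is $\Bin(n,|D_B|/n)$, hence asymptotically $\Pois(|D_B|)$, and conditioned on this count the points are uniform in $D_B$; thus $\Pr[I_B=1]$ converges to the corresponding probability under an intensity-$1$ Poisson process. For the lower bound it suffices to exhibit one positive-probability way to realise the event: because $\ell\ge \diam(S)+2R+1$, some translate of $S$ sits strictly inside $B$ with an $R$-margin, and the event that $\ydn$ places exactly one point in a tiny ball about each point of such a translate while leaving the rest of $D_B$ empty has probability bounded below by a positive constant independent of $n$; on this event an $R$-aligned $(\e,R)$-copy is present, so $p>0$. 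Consequently $\E[\zeta_S]=C_{S,R,\e}\,n+o(n)$ with $C_{S,R,\e}=p/(3\ell)^d>0$.

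Next I would bound the variance. Writing $\V(\zeta_S)=\sum_{B}\V(I_B)+\sum_{B\ne B'}\mathrm{Cov}(I_B,I_{B'})$, the diagonal contributes at most $N=O(n)$. For the off-diagonal terms the colouring does the work: two distinct cubes of the same colour have their nearest faces at distance at least $2\ell$, so their dilations $D_B,D_{B'}$ are at distance at least $2\ell-2R>0$ (as $\ell>R$) and hence disjoint. Thus $I_B$ and $I_{B'}$ depend on the configuration in disjoint regions; for the binomial process these are only weakly dependent, since conditioning on the occupancy counts $(N_B,N_{B'})$ makes the interior configurations independent and $(N_B,N_{B'})$ are counts in two disjoint cells of a multinomial, whose joint law differs from the product of its marginals by $O(1/n)$. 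Hence $\mathrm{Cov}(I_B,I_{B'})=O(1/n)$, the off-diagonal sum is $O(N^2/n)=O(n)$, and $\V(\zeta_S)=O(n)$.

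Finally, Chebyshev's inequality gives $\Pr[\,|\zeta_S-\E\zeta_S|>\delta n\,]=O(1/n)$ for every fixed $\delta>0$, which yields $\zeta_S=C_{S,R,\e}\,n+o(n)$ w.h.p., as required. Alternatively one may Poissonise: under an intensity-$1$ Poisson process the $I_B$ are genuinely independent $\mathrm{Bernoulli}(p)$ by disjointness of the $D_B$, so a Chernoff bound gives exponential concentration, and de-Poissonisation transfers this to $\ydn$ because adding or deleting a single point changes $\zeta_S$ by $O(1)$ while the Poisson point count deviates from $n$ by only $O_p(\sqrt n)$. I expect the main obstacle to be the two positivity/independence inputs rather than the concentration step: establishing that the local event has probability bounded below by a positive constant $p$ (so that $C_{S,R,\e}>0$), and verifying that the colouring really does force the determining regions $D_B$ to be pairwise disjoint so that the covariances are negligible. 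Both are elementary, but they are exactly where the geometry of the construction (the margin $\ell>R\ge\diam(S)$ and the $3^d$-colouring) is used.
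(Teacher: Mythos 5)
Your proof is correct, but note that the paper itself contains no proof of Lemma \ref{l.findit}: it simply defers to Observation 3.1 of \cite{TSPconstants}. So the right comparison is with the standard argument behind that citation, which is essentially what you have reconstructed: a local-dependence second-moment argument (or, in your alternative, Poissonization plus de-Poissonization). Your account is complete and the key steps are sound: the indicator $I_B$ is indeed measurable with respect to $\ydn\cap D_B$; the $3^d$-coloring forces the dilations $D_B$ of same-color cubes to be pairwise disjoint; conditional on occupancy counts the configurations in disjoint regions are independent, and the total-variation gap between the multinomial joint law of $(N_B,N_{B'})$ and the product of its marginals is $O(1/n)$, so $\V(\zeta_S)=O(n)$ and Chebyshev finishes. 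Two of your adjustments deserve emphasis because they are necessary, not cosmetic. First, you read $S'\approx_\e S$ up to translation; the paper's definition of $\approx_\e$ is literally position-bound, but translation is clearly intended (compare the use of $Y+p$ in Definition \ref{d.protected}), and without it the lemma is false as stated. Second, you replaced the paper's sidelength ``$\approx 10dR$'' by $\ell\ge\diam(S)+2R+1$; since the lemma quantifies over all $S$ and $R$, the paper's choice need not even admit a translate of $S$ inside a subcube, and your modification is exactly what is needed to make $C_{S,R,\e}>0$. Your Poissonization route is also valid and in fact cleaner for concentration: under the intensity-one Poisson process the $I_B$ are genuinely independent, and a single added or deleted point lies in at most one dilation $D_B$ of a first-color cube, so $\zeta_S$ changes by at most one under the coupling, which transfers the Chernoff bound to the binomial model with only an $O_p(\sqrt n)=o(n)$ correction.
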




Now we take $\e_2$ to be the minimum of $\tfrac {\d_0} 8$ and the parameter $\e$ from the definition of scalefreeness for the heuristic $H$, take $R=D_2$, take $S=\Pi_H^3=\Pi_H^3(\cA_0,N_0)$, and use Lemma \ref{l.findit} to find a linear number of $(\tfrac {\e_2} {(Kd+1)^{12}},R+\e_2)$ copies $Z$ of $\Pi_H^3$ which are $\d$-distance separated.

We say that such a 
copy $Z$ of $\Pi_H^3$ has the property $\Lambda_X$ if the tour $T_H$ can be shortened within $Z$ by $\d_1$ for some sufficiently small but fixed $\d_1>0$, 
and we let $\nu_\Lambda$ denote the number of 
copies $Z$ of $\Pi_H^3$ with property $\Lambda_X$.

\noindent\textbf{Claim:} There exists $C>0$ so that if $H$ is scalefree, then $\nu_\Lambda\geq Cn$ w.h.p.

Note that the claim immediately implies the theorem: in the rescaled torus $[0,t]^d$, the heuristic pays a total error of $\d_1\cdot \nu_\Lambda$, and rescaling by $t=n^{1/d}$, this gives Theorem \ref{t.length}.


\begin{proof}[Proof of the Claim]
  Each 
  $\d$-separated copy $Z$ itself consists of three different copies $Z_1,Z_2,Z_3\approx_\ep\Pi_H(\cA_0,N_0)$, and when $Z$ fails to have property $\Lambda_X$, Lemma \ref{l.P3} implies that at least one of the copies $Z_j$ is transited by $T_H$ in a single path.    Fixing a choice of such a copy $Z_j$, Lemma \ref{l.Piset} gives that at least one of the four copies $M_i$ of $\cM^g_H$ in $Z_j$ (and so one of the twelve copies of $\cM^g_H$ in $Z$) satisfies hypotheses \ref{p.atapath} and \ref{p.angle} of Definition \ref{d.scalefree}.  Of course, by construction of $\cM_H$, hypothesis \ref{p.protected} is satisfied at the center of the copy.  Moreover, with at most finitely many exceptions, we may assume that $\cM_H$ satisfies hypothesis \ref{p.exceptional}.  Moreover, by construction of the perturbation $\cM_H^g$ of $\cM_H$ using Lemma \ref{l.detround}, $\cM_H^g$ can be rounded to polynomial precision and still satisfy \ref{p.general}.   Thus all the hypotheses of the implication in Definition \ref{d.scalefree} are satisfied for this copy $M_i$ of $\cM_H^g$.  In particular, we conclude that the tour $T_H$ traverses this copy $M_i$ with one of the paths output by $A_H$ on input $\cM_H^g$; by hypothesis (since our proposed Set Partition algorithm output FALSE on the instance), the path through $M_i$ can be shortened by $\d_1$.
\end{proof}

\section{Branch and Bound}\label{BB}
In our paper \cite{TSPconstants} we considered branch and bound algorithms for solving the Euclidean TSP. Branch-and-bound is a pruning process, which can be used to search for an optimum TSP tour.  Branch-and-bound as we consider here depends on three choices:
\begin{enumerate}
\item A choice of heuristic to find (not always optimal) TSP tours;
\item A choice of lower bound for the TSP;
\item A branching strategy (giving a \emph{branch-and-bound} tree).
\end{enumerate}
For us a branch-and-bound tree is a rooted tree $T_{B\&B}$ where each vertex $v$ is labeled by a 4-tuple $(b_v,\Omega_v,I_v,O_v)$. Here $I_v,O_v$ are disjoint sets of edges and $\Omega_v$ is the set of tours $T$ such that $T\supseteq I_v$ and $T\cap O_v=\emptyset$. The value $b_v$ is some lower bound estimate of the minimum length of a tour in $\Omega_v$ e.g. the optimal value of the Held-Karp linear bound relaxation \cite{DFJ54}, \cite{HK1}, \cite{HK2}. In addition there is an upper bound $B$, which is the length of the shortest currently known tour, found by some associated heuristic. This is updated from time to time as we discover better and better tours. If the root of the tree is denoted by $x$ then we have $I_x=O_x=\emptyset$. 

In \cite{TSPconstants} we allowed essentially any branching strategy. Given $\xdn$, we allowed any method to produce a tree satisfying the following:
\begin{enumerate}[label=(\alph*)]
\item When $v$ is a child of $u$, $I_v\supseteq I_u$ and $O_v\supseteq O_u$.
\item If the children of $u$ are $v_1,\dots,v_k$, then we have $\Omega_{u}=\bigcup_{i=1}^k \Omega_{v_i}.$
\item \label{enum.leaves} The leaves of the (unpruned) branch-and-bound tree satisfy $|\Omega_v|=1$.
\end{enumerate}

This process terminates when the set $L$ of leaves of the pruned branch-and-bound tree satisfies $v\in L\implies b_v\geq B$; such a tree corresponds to a certificate that the best TSP tour found so far by our heuristic is indeed optimum. It is clear that if $v\in L$ then $\Omega_v$ does not contain any tours better than one we already know.

In \cite{TSPconstants} we concentrated on showing that even if we had access to the exact optimum i.e. letting $B=\l$, the minimum length of a tour, none of a selected set of natural lower bounds would be strong enough to make the branch and bound tree polynomial size.   Note that this result does not depend on the branching process itself being efficient.

The aim of this section is to show that even if $b_v=\l$ then a certain branching strategy will fail. Unlike in \cite{TSPconstants}, we cannot allow any branching strategy for our present result, as we might (though extreme computation in the branching process) find that we directly branch to a vertex $w$ where $I_w$ is exactly the set of edges of the shortest tour, giving $B=\l$, causing the algorithm to terminate, given that $b_v=B$ for all leaves of the tree.

It turns that to prove our result, we will need only a mild restriction on the branching strategies allowed.
\begin{enumerate}
\item A vertex $v$ of out tree has two children $w_+,w_-$. Here $I_{w_+}=I_v\cup\set{e},O_{w_+}=O_v$ and $I_{w_-}=I_v,O_{w_+}=O_v\cup\set{e}$  for some edge $e$.
\item The branch and bound tree is explored in a breadth first manner i.e. if the root is at level 0, we do not produce vertices of level $k+1$ until all vertices at level $k$ have been pruned or branched on. 
\end{enumerate}
Note that this captures most branching strategies used in practice, which typically are using an LP-based lower bound on the length of the tour, and branching on the binary values possible for fractionally-valued variables in the linear program.

Theorem \ref{thmbb} will follow easily from the following claim: let $H$ denote some scalefree heuristic and for a vertex $v$ of $\TB$ let $H(v)$ denote the length of the tour constructed by $H$, when it accounts for $I_v,O_v$. Let $\l(v)$ denote the length of the shortest tour in $\Omega_v$.  
\begin{lemma}\label{lemdone}
There exist constants $\e_1,\e_2$ such that w.h.p. if vertex $v$ is at depth at most $k_1=\e_1 n$ then $H(v)\geq \l(v)+\e_2n^{1/2}$.
\end{lemma}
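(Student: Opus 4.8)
The goal is Lemma~\ref{lemdone}: for a branch-and-bound node $v$ at depth $\le k_1 = \e_1 n$, the heuristic tour length $H(v)$ exceeds the optimal tour length $\l(v)$ in $\Omega_v$ by at least $\e_2 n^{1/2}$. The key point is that a node at depth $\le \e_1 n$ in the binary branch-and-bound tree has forced/forbidden at most $\e_1 n$ edges via $I_v, O_v$. Since $H$ is scalefree, the earlier Claim (proof of Theorem~\ref{t.length}) produced $\nu_\Xi \ge Cn$ aligned copies $Z$ of $\Pi_H^3$ where $H$ pays a $\d_1$ additive error, each confined to a ball of radius $R = D_2 = O(1)$. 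The plan is to show that constraining $\le \e_1 n$ edges can interfere with only a small fraction of these copies, so a linear number of error-paying copies survive, and then rescale.

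**The plan.** First, I would re-run the machinery from the proof of the Claim, but relative to the constrained instance: the constraints $I_v, O_v$ force certain edges into and forbid others from the tour, but each aligned copy $Z$ lives in a region separated by distance $> R$ from everything else. A forced edge $e \in I_v$ can affect the argument in a copy $Z$ only if $e$ is incident to a vertex inside $Z$ (or forces the tour to enter/leave $Z$ in an unexpected way, breaking hypothesis~\ref{p.atapath} or \ref{p.angle}). The crucial counting step is that $|I_v| + |O_v| \le \depth(v) \le \e_1 n$, so the constraints touch at most $2\e_1 n$ vertices, hence at most $2\e_1 n$ of the aligned copies. Taking $\e_1$ small compared to $C$ (the density of error-paying copies from the Claim) guarantees that at least $(C - 2\e_1)n$ aligned copies are untouched by any constraint. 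On each such copy, the entire scalefreeness argument of the Claim goes through verbatim for the tour realizing $\l(v)$ versus the tour $H$ produces, since the heuristic's local behavior and the optimal local transit are governed only by points within $B(p,R)$, which the constraints do not affect.

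**Completing the estimate.** On each of the $\ge (C - 2\e_1)n$ untouched copies, the Claim's dichotomy applies: either the copy has property $\Xi$ (so $H(v)$ can be locally shortened by $\d_1$, meaning $H$ pays at least $\d_1$ over optimal there) or property $\La$ (instability). Exactly as in the proof of \eqref{xi}, I would invoke the perturbation sequence $X^{13}, \dots, X^1$ and Observation~\ref{o.slack} to convert a linear number of unstable ($\La$) copies into a linear number of $\Xi$ copies with positive probability, hence w.h.p. a linear number $\ge C_0 n$ of copies where $H(v)$ exceeds $\l(v)$ by an additive $\d_1$ \emph{each}. Because these copies are pairwise separated by distance $> R$, the local shortenings are independent and additive, so the total error in the rescaled torus $[0,t]^d$ is $\ge \d_1 C_0 n$. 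Rescaling by $t = n^{1/d}$ divides lengths by $n^{1/d}$; for $d = 2$ this yields an excess of order $n / n^{1/2} = n^{1/2}$, giving $H(v) \ge \l(v) + \e_2 n^{1/2}$ for a suitable $\e_2 > 0$.

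**The main obstacle.** The delicate point is that $\l(v)$ is the optimum \emph{subject to the constraints} $I_v, O_v$, not the global optimum, and $H(v)$ is also computed accounting for these constraints; I must ensure the scalefreeness implication still compares the right two objects on each untouched copy. The resolution is precisely the separation argument: since an untouched copy $Z$ has no constrained edges incident to it and lies at distance $>R$ from all other relevant structure, the constrained optimal tour still transits $Z$ locally optimally (Lemmas~\ref{qlemma}, \ref{l.Piset}, \ref{l.P3} depend only on local geometry and on the tour entering/leaving along nearly-straight edges, which the global constraints elsewhere cannot spoil), and $H$'s behavior on $Z$ is likewise unaffected. Thus the additive $\d_1$ gap between $H(v)$ and $\l(v)$ is genuine on each untouched copy, and the union bound over the at most $2\e_1 n$ constrained copies is what forces the restriction to depth $\e_1 n$. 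Establishing that the constraints genuinely cannot propagate their effect beyond distance $R$ into an untouched copy is the step requiring the most care.
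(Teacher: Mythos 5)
Your proposal is correct and follows essentially the same route as the paper's (much terser) proof: both arguments take the linearly many copies of $Z$ with property $\Xi_X$ guaranteed by the analysis of Section \ref{alsfh}, note that a node at depth at most $\e_1 n$ constrains at most $\e_1 n$ edges and hence can touch only a small linear number of those copies, and conclude that the surviving copies each contribute an additive penalty which, after rescaling by $t=n^{1/d}$ with $d=2$, sums to $\e_2 n^{1/2}$. The additional care you take --- re-running the perturbation machinery under the constraints and arguing that constraints cannot propagate into an untouched copy --- is exactly what the paper compresses into its closing remark that the unaffected copies ``provide the necessary increases over the optimum.''
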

\begin{proof}
Let $\a_0n$ be the minimum number of copies of $Z$ with property $\Lambda_X$ promised by our analysis above, and let $\a_1n^{1/2}$ be a lower bound on the penalty paid by our heuristic for each copy of $Z$. Then if $\e_1=\a_0/2$ we have the lemma for $\e_2=\a_0\a_1/2$.  
\end{proof}
Here we have used the fact that at depth at most $k_1$, w.h.p. there will be a linear number of copies of $Z$ that are unaffected by $I_v,O_v$. These copies provide the necessary increases over the optimum.

It follows from Lemma \ref{lemdone} that for $v$ at depth at most $k_1$ we have
$$H(v)\geq \l(v)+\e_2n^{1/2}\geq B+\e_2n^{1/2}.$$
This means that $v$ is not a leaf. It follows that w.h.p. there will be at least $2^{k_1}=e^{\Omega(n)}$ leaves and Theorem \ref{thmbb} follows.
\proofend
\section{Further work}
From among the heuristics used in practice, the major omission from the present manuscript are the $k$-opt improvement heuristics, and their relatives (such as the Lin-Kernighan heuristic).  Are they scalefree in our sense (or a related sense for which Theorem \ref{t.length} holds)?

\label{s.rand}

\end{document}

\old{
\begin{lemma}
\label{l.M}
Suppose that $D_1$ is sufficiently large, $(\cA,N)$ is an instance of the set cover problem, and $\cM^g_H=\cM^g_H(\cA,N)$. \textbf{If} $\cM_H^g\approx_\d M\sbs X\sbs [0,t]^d$ for $\d=\tfrac{\e_0}{10^4 C_0 D_0k}$ and  $\dist(M,X\stm M)\geq D_1$, \textbf{then} any  TSP tour $T$ on $X$ which does not have the property that at least one of the three (approximate) Papadimitriou sets in $M$ is traversed (optimally) by a path by $T$ can be shortened in $M$ by an additive constant. \qed
\end{lemma}
\begin{proof}
If the tour transits $\M$ in more than one (so: exactly two) passes, then any of the three copies of $\cQ$ which is hit by both passes of the tour has the property that each pass of the tour enters and leaves along a nearly straight pair of edges; otherwise, the tour could be shortened by an additive constant by shortcutting the pass which is not nearly straight, and using the other pass to cover the remaining points in that copy of $\cQ$.  Similarly, from Observation \ref{o.multiple}, we have that the entrance/exit pair into/out of $M$ for each pass is nearly straight.

By considering cases according to the order in which the two passes visit the three copies of $\cQ$, we find that there is always some copy of $\cQ$ which is transited by exactly one pass of the tour, with an entering/exiting pair of edges which are at angle at least $\pi/3-\e$, for arbitrariliy small $\e$.  Lemma \ref{qlemma} implies that this copy of $\cQ$ is transited optimally by the tour, unless the tour can be shortened.
\end{proof}
}
\old{
\begin{figure}[t]
\begin{center}
\begin{pdfpic}
\psset{unit=.5}
\begin{pspicture}(-2,-2)(2,3.5)
\psdots[dotstyle=o](-1,0)(1,0)(0,1.73) 

\psline(.7,3.5)(0,1.73)(-1,0)(1,0)(1.5,-1.75)
\end{pspicture}\hspace{\stretch{1}}
\begin{pspicture}(-2,-2)(2,3.5)
\psdots[dotstyle=o](-1,0)(1,0)(0,1.73) 

\psline[linewidth=1pt](-1,3.46)(0,1.73)(1,0)(2,-1.73)
\psline[linewidth=.1pt](1,3.46)(0,1.73)(-1,0)(-2,-1.73)
\end{pspicture}\hspace{\stretch{1}}
\begin{pspicture}(-2,-2)(2,3.5)
\psdots[dotstyle=o](-1,0)(1,0)(0,1.73) 

\psline[linewidth=1pt](-1,3.46)(0,1.73)(1,0)(-2,0)
\psline[linewidth=.1pt](1,3.46)(0,1.73)(-1,0)(-2,-1.73)
\end{pspicture}\hspace{\stretch{1}}
\begin{pspicture}(-2,-2)(2,3.5)
\psdots[dotstyle=o](-1,0)(1,0)(0,1.73) 

\psline[linewidth=1pt](-1,3.46)(0,1.73)(1,0)(-1,0)(0,-1.73)
\psline[linewidth=.1pt](2,2)(0,1.73)(-2,1.46)
\end{pspicture}\hspace{\stretch{1}}
\begin{pspicture}(-2,-2)(2,3.5)
\psdots[dotstyle=o](-1,0)(1,0)(0,1.73) 

\psline[linewidth=1pt](-.5,3.46)(0,1.73)(1,0)(1.5,-1.73)
\psline[linewidth=.1pt](-1.4,3.46)(-1,0)(-.8,-1.73)
\end{pspicture}
\end{pdfpic}
\end{center}
\caption{\label{f.3forM} 3 ways to transit $M$.   ($\circ$ denotes a copy $Q$ of $\cQ$.)}
\end{figure}
}


\newpage

Let $\th_j=\angle p' pp_j$ for $j=1,2,\ldots,K$. We estimate
\begin{align}
d_{j}'-d_j&=(\r^2+d_j^2-2\r d_j\cos\th_j)^{1/2}-d_j\nonumber\\
&=d_j\brac{\brac{1+\frac{\r^2}{d_j^2}-\frac{2\r\cos\th_j}{d_j}}^{1/2}-1}\nonumber\\
&=-\r\cos\th_j+\frac{\r^2(1-\cos^2\th_j)}{2d_j}+O(\r^3).\label{err}
\end{align}
Now consider the problem
\[
\text{Maximise }\sum_{j=1}^K\frac{x_j^2}{d_j}\text{ subject to }\sum_{j=1}^Kx_k=\a.
\]
If $D=\sum_{j=1}^Kd_j$ then this problem is solved by putting $x_j=\a d_j/D$. In which case, the optimal value for the problem is $\a^2/D$. It follows that if 
\[
\F(p,\th)=\sum_{j=1}^K\cos\th_j=\a
\] 
then
\begin{multline}\label{fff}
\sum_{j=1}^K\brac{-\r\cos\th_j+\frac{\r^2(1-\cos^2\th_j)}{2d_j}}\geq \frac{\r^2}{2}\sum_{j=1}^K\frac{1}{d_j}-\r\a-\frac{\r^2\a^2}{2D}\geq \\
\frac{\r^2K^2}{2D}-\r\a-\frac{\r^2\a^2}{2D}=\frac{\r^2}{2D}\brac{\brac{\a+\frac{D}{\r}}^2 -\brac{K^2+\frac{D^2}{\r^2}}}.
\end{multline}